\newtheorem{theorem}{\bf Theorem}
\begin{document}

\title{Refracting Reconfigurable Intelligent Surface Assisted URLLC for Millimeter Wave High-Speed Train Communication Coverage Enhancement}

\author{

Changzhu~Liu,~\IEEEmembership{Graduate~Student~Member,~IEEE},
Ruisi~He,~\IEEEmembership{Senior~Member,~IEEE}, \\
Yong Niu,~\IEEEmembership{Senior Member,~IEEE},
Shiwen~Mao,~\IEEEmembership{Fellow,~IEEE},\\
Bo~Ai,~\IEEEmembership{Fellow,~IEEE},
and Ruifeng Chen

\thanks{
Copyright (c) 20xx IEEE. Personal use of this material is permitted. However, permission to use this material for any other purposes must be obtained from the IEEE by sending a request to pubs-permissions@ieee.org.
This work is supported by the Fundamental Research Funds for the Central Universities under Grant 2022JBQY004, the National Natural Science Foundation of China under Grant 62431003, 62271037, 62221001, and 62231009, the State Key Laboratory of Advanced Rail Autonomous Operation under Grant RCS2022ZZ004, and the Beijing Natural Science Foundation under Grant L232042. \emph{(Corresponding author:Ruisi He; Yong Niu.)}

C. Liu, R. He, and B. Ai are with the School of Electronics and Information Engineering, and the Frontiers Science Center for Smart High-speed Railway System, Beijing Jiaotong University, Beijing 100044, China (e-mails: changzhu{\_}liu@bjtu.edu.cn; ruisi.he@bjtu.edu.cn; boai@bjtu.edu.cn).

Y. Niu is with the School of Electronics and Information Engineering, Beijing Jiaotong University, Beijing 100044, China, and also with the National Mobile Communications Research Laboratory, Southeast University, Nanjing 211189, China (e-mail: niuy11@163.com).

S. Mao is with the Department of Electrical and Computer Engineering, Auburn University, Auburn, AL 36849-5201 USA (e-mail: smao@ieee.org).

R. Chen is with the Institute of Computing Technology, China Academy of Railway Sciences Corporation Limited, Beijing 100081, China (e-mail: ruifeng{\_}chen@126.com).

              }% <-this % stops a space
\thanks{
             %Manuscript received XXX, XX, 2015; revised XXX, XX, 2015.
              }
              }

\markboth{IEEE Transactions on Vehicular Technology}
{}
%{Shell \MakeLowercase{\textit{et al.}}: Bare Demo of IEEEtran.cls for Journals}

\maketitle

\begin{abstract} %
High-speed train (HST) has garnered significant attention from both academia and industry due to the rapid development of railways worldwide. Millimeter wave (mmWave) communication, known for its large bandwidth is an effective way to address performance bottlenecks in cellular network based HST wireless communication systems. However, mmWave signals suffer from significant path loss when traversing carriage, posing substantial challenges to cellular networks. To address this issue, reconfigurable intelligent surfaces (RIS) have gained considerable interest for its ability to enhance cell coverage by reflecting signals toward receiver. Ensuring communication reliability, a core performance indicators of ultra-reliable and low-latency communications (URLLC) in fifth-generation systems, is crucial for providing steady and reliable data transmissions along railways, particularly for delivering safety and control messages and monitoring HST signaling information. In this paper, we investigate a refracting RIS-assisted multi-user multiple-input single-output URLLC system in mmWave HST communications. We propose a  sum rate maximization problem, subject to base station beamforming constraint, as well as refracting RIS discrete phase shifts and reliability constraints. To solve this optimization problem, we design a joint optimization algorithm based on alternating optimization method. This involves decoupling the original optimization problem into active beamforming design and packet error probability optimization subproblem, and discrete phase shift design subproblems. These subproblems are addressed exploiting Lagrangian dual method and the local search method, respectively. Simulation results demonstrate the fast convergence of the proposed algorithm and highlight the benefits of refracting RIS adoption for sum rate improvement in mmWave HST networks.

% general convex approximations

\end{abstract}
% Note that keywords are not normally used for peer review papers.
\begin{IEEEkeywords}
High-speed train (HST) communication, millimeter-wave, refracting RIS, ultra-reliable and low-latency communications, alternating optimization.
\end{IEEEkeywords}

% For peer review papers, you can put extra information on the cover
% page as needed:
% \ifCLASSOPTIONpeerreview
% \begin{center} \bfseries EDICS Category: 3-BBND \end{center}
% \fi
%
% For peerreview papers, this IEEEtran command inserts a page break and
% creates the second title. It will be ignored for other modes.
\IEEEpeerreviewmaketitle

\section{Introduction}
\IEEEPARstart{H}{igh-speed} train (HST), characterized by its exceptional advantages of high speed, low energy consumption, high safety, and high comfortness, has developed rapidly worldwide. The total mileage of HST has surpassed a remarkable $40,000$ kilometers in China, ranking the first in the world \cite{r1}. As one of the most popular green energy transportation infrastructure, numerous countries are racing to develop or upgrade their HST systems to transition to green energy due to its convenience and reduced carbon emissions \cite{r2,r1new,r2new}. To support the increasing demands of intelligent HST systems,the Global System for Mobile Communications-Railway (GSM-R) is progressively being replaced by the fifth-generation mobile networks for Railway (5G-R) due to 5G's huge advantages of supporting high mobility scenarios, large bandwidth, and low latency, to more effectively ensure the application demands of intelligent HST than the narrowband GSM technology \cite{r3,r4new,r5new}. Among the pivotal technologies of 5G, the ultra-reliable and low-latency communications (URLLC) technology can provide low end-to-end latency and ultra highly reliable communication for HST services such as security, control messaging, and monitoring \cite{r4}. The key performance indicator of URLLC is that with an end-to-end transmission latency of less than $1$ ms and a packet error probability (PEP) on the order of $10^{-5}$, which is critical for the development of future smart HST communication systems \cite{r5}. In HST wireless communications, train-to-ground data transmission not only needs to provide safety-critical railway signal information, but also support various high data rate services. To meet the growing demand for high data rates and huge bandwidth, the development of higher frequency bands is urgent. Millimeter wave (mmWave) technology is proposed to enhance the train-to-ground communications and presents a promising opportunity for future smart HST communication systems \cite{r6}.

As one of the key technologies of 5G, mmWave frequency bands covers from 30 GHz to 300 GHz and have abundant spectrum resources \cite{t1,t2,t3,t4}. Currently, the $3^{\mathrm{rd}}$ generation partnership project divides the frequency range of the 5G mmWave band into 24.25--71.0 GHz, known as frequency range 2 (FR2) \cite{3GPP}. This technology is expected to offer several Gbps transmission data rates for HST communications, enhancing user experiences with applications such as video conferencing, live broadcasting, and online gaming \cite{r7}. Currently, mmWave and massive multiple-input multiple-output (MIMO) techniques are investigated to provide reliable communications with an over-the-air latency of a few milliseconds and extremely high throughput \cite{r8}. However, the high frequency leads to higher phase noise, necessitating larger subcarrier spacing to alleviate its effects. Larger subcarrier spacing can achieve shorter transmission time intervals through scalable digital parameters, making mmWave communications well-suited for URLLC due to their high operating frequency. Moreover, there are two key aspects that enable mmWave communications to support URLLC effectively. The first is beamforming. Operating in the mmWave frequency bands enables small-sized antenna arrays with huge amounts of elements to be used in the user device, forming extremely narrow beams and  facilitate pencil-beam directional transmissions. Significant directional gain can improve the range of mmWave transmissions and alleviate co-channel interference. The second aspect is multi-connectivity. This technology offers both advantages and disadvantages. One major disadvantage is the need for a central processing unit to control all distributed base stations (BSs). Additionally, it requires cooperation between distributed BSs, which necessitates low-error rate shared links. Implementing multi-connectivity also demands complex coordination and management mechanisms, additional signaling overhead for maintenance, and the deployment of more antennas and radio frequency front-ends on user equipment and BSs. Despite these challenges, the benefits of multi-connectivity technology are significant. The high penetration attenuation at mmWave frequencies poses a major barrier to satisfactory communication performance when encountering obstacles. Multi-connectivity effectively addresses this challenge by enabling users to associate with multiple BSs simultaneously, thereby improving the reliability of the communication link. It can reduce network latency by dispersing traffic and reducing the load on a single BS. Additionally, multi-connectivity provides backup paths, allowing for quick switching if the primary path fails, thus maintaining low-latency communications. To meet the high reliability requirements of URLLC, the system must incorporate a redundant design. Multi-connectivity offers link redundancy, which is essential for such high-reliability needs. Given the short signal propagation range of the mmWave frequency band, multi-connectivity expands the effective coverage range through collaboration among multiple BSs, thereby satisfying the coverage requirements of URLLC. By enabling packets to be transmitted over multiple paths to the target user device, multi-connectivity significantly reduces the packet loss rate due to time and buffer overflow constraints. In summary, by combining these technologies, mmWave URLLC has a tremendous potential to meet the stringent demands for high reliability and low latency.

Recently, the reconfigurable intelligent surface (RIS) has garnered significant interest for its ability to control and re-engineer the wireless propagation environment, thereby ensuring that received signals possess the desired property. This capacity markedly enhances the spectrum efficiency and coverage of wireless communications \cite{r9,r9new,r9new2,r9new3}. The integration of RIS introduces a level of intelligence into the wireless communication channel \cite{r10new,r11new}. A RIS is a metasurface composed of numerous low-cost passive reflecting elements, which are configured and adjusted by a programmable controller to alter the phase and/or amplitude response of the metasurface, thereby changing the reflection behavior of incident waves \cite{r10}. The goal of this operation is to effectively enhance the received signal at a specific receiver location, thereby improving the overall system performance. Additionally, the control overhead for the RIS is significantly reduced by leveraging a fast backhaul link from the BS. Using relatively fast PIN diodes, it takes only a few micro-seconds to configure the phase shift of the RIS elements, meeting the URLLC requirements. In contrast to conventional relay-assisted communication systems, the RIS incurs no additional communication delay, as it avoids analog to digital domain conversion through radio frequency chains and eliminates the necessity for complex signal processing and forwarding processes \cite{r11}. Furthermore, the utilization of RIS can fulfill the rigorous latency and reliability demands, ultimately enhancing the quality of URLLC transmissions. Consequently, RIS technology can play a key role in URLLC wireless communications \cite{r10,r12}, and it can be effectively applied in URLLC short packet transmissions (SPT) in the finite blocklength (FBL) regime to ameliorate the performance of high-speed train (HST) communication systems by enhancing the received signal quality and guaranteeing reliable communication.

In recent years, several studies on RIS-assisted HST communications \cite{r13,r14,r15,r16}. Most of these studies considered the deployment of RIS panels at the fixed location on the railway side. However, such deployment limits the service time for high mobility users \cite{r17}. To solve this problem, Wang {\it{et al.}} \cite{r18} proposed deploying RIS on HST for in-train user tracking, leveraging their reflective characteristics to alleviate delay spread. However, this method did not effectively resolve the complex issue of signal attenuation induced by train carriages. Moreover, most existing works on RIS have focused on reflective RIS, assuming co-located transmitters and receivers on one side of the RIS, and that the incident signal is fully reflected, unable to penetrate the surface \cite{r19,r20}. To enhance the flexibility of RIS deployment, the concept of refractive RIS is introduced. Refractive RIS allows all incident signals to pass through the RIS surface, effectively reconstructing channels for blocked users and significantly improving transmission performance \cite{r17,r21,r22,r23}.

Motivated by the aforementioned facts, this study investigates a refracting RIS-aided URLLC system in the mmWave HST scenario. In this setup, a refracting RIS is deployed on the train windows to refract signals towards the users, ensuring reliable communication from the base station (BS) to multiple reliability-critical users during mobility. The sum rate is used as the quality-of-service (QoS) metric. This study considers downlink multiple-input single-output (MISO) communications, where the BS is equipped with multiple antennas serving multiple single antenna users on board. The major contributions of this study are summarized as follows:
\begin{itemize}
\item This study considers a novel refracting RIS-aided multi-user (MU) MISO downlink URLLC system in the mmWave HST scenario to enhance HST reliability coverage, where a refracting RIS is deployed on the train window to create robust wireless links for HST communications.
\item Based on the channel model, a sum rate maximization problem considering the constraints on BS transmit power, RIS phase shifts and reliability is proposed. The coupled optimization variables present a challenge for finding closed-form solutions. Therefore, this study decomposes it into two subproblems and exploit an iterative alternating optimization scheme to address them.
\item The active beamforming optimization and PEP optimization subproblem is non-concave. This study transforms it into a convex optimization problem and address it using Lagrangian dual method. The refracting RIS discrete phase shift optimization subproblem is addressed with a local search method. 
\item This study compared the performance of the proposed algorithm with using ideal phase shift, Shannon rate with ideal phase shift method, Shannon rate method, binary search, random phase shift method, and without RIS deployment. The results represent that the proposed algorithm can significantly improve the sum rate performance under various system parameters, such as transmit power,  number of RIS elements, number of users, number of antennas, number of RIS quantization bits, blocklength, maximum PEP, speed of HST and Rician K-factor. The results validated that the deployment of refracting RIS can effectively improve sum rate of HST communication systems.
\end{itemize}

The rest of the paper is organized as follows. The related work is reviewed in Section II. Section III describes the system model and  formulates the sum rate maximization problem. In Section IV, the sum rate maximization algorithm is proposed. Section V presents simulation results. The final section are conclusions.

{\textbf{\textit{Notations}}}: Italic letters denote scalars. Bold-face lower-case and upper-case letters denote vectors matrices respectively. For $\mathbf{x} = \left(x_1,\cdots,x_n\right)^T$, $\mathrm{diag}\left(\mathbf{x}\right)$ denotes a diagonal matrix of the size $n\times n$ with $x_1,\cdots,x_n$ on the diagonal.  $\mathbf{H} \in \mathbb{C} ^{x\times y}$ denotes that $\mathbf{H} $ is the space of $x\times y$ complex-valued matrices. Let $\left(\cdot\right)^T$ and $\left(\cdot\right)^H$ denote the transpose and Hermitian transpose operations, respectively. $\sim$ stands for ``distributed as". $ \otimes$ represents the Kronecker product.

\section{Related Work}

\subsection{URLLC for mmWave communications}
URLLC-based mmWave communications have been extensively studied due to their potential for high throughput applications. MmWave communications, with their abundant spectrum resources, have become a key technology for enabling high throughput URLLC. Feng {\it{et al.}} \cite{r24} studied URLLC in the  mmWave band in the smart factory scenario, and proposed a vision-aided URLLC framework that eliminates the overhead associated with channel training and beam sweeping. The proposed method of blockage prediction and reference signal received power prediction can assist BS handover to avoid communication interruption. Adamu {\it{et al.}} \cite{r25} introduced a hybrid approach to evaluate bit-error and ergodic capacity performance. Liu {\it{et al.}} \cite{r26} investigated the coexistence of enhanced mobile broadband (eMBB) and URLLC in mmWave communications, a throughput maximization problem was formulated to satisfy the reliability and latency demands of URLLC users. Moreover, multi-connectivity technology had been shown to significantly impact the reliability and latency demands for URLLC users, as demonstrated by the findings in \cite{r27}. A user association scheme focused on throughput maximization was proposed and solved using queuing theory and the matrix geometric method.URLLC in mmWave massive MIMO communication systems was also studied, with a network utility maximization problem under probabilistic reliability and latency constraints being addressed through the Lyapunov technique \cite{r28}. Dias {\it{et al.}} \cite{r29} showed the benefits of sliding window network coding for URLLC mmWave networks using a testbed. Ibrahim {\it{et al.}} \cite{r30} proposed the use of beamforming repeaters to enhance reliable communication in the mmWave band, solving an overall scheduling latency minimization problem with a low-complexity greedy algorithm. However, these studies did not address URLLC with mmWave in high mobility scenarios, such as vehicular Networks, unmanned aerial vehicles, and HST. This paper investigates URLLC and mmWave communications in the HST scenario, aiming to achieve reliable communications for HST.

\subsection{URLLC for High Mobility Communication}
Supporting URLLC in high mobility scenarios, such as vehicular networks, unmanned aerial vehicles, and HST, is a challenging task that has received increasing attention \cite{r31,r32,r33,r34,r35,r36new,r36,r37}. Samarakoon {\it{et al.}} \cite{r31} proposed a power allocation algorithm based on federated learning and Lyapunov optimization to minimize the system power consumption of vehicular users in vehicle-to-everything networks while satisfying low latency and high reliability requirements. Guo {\it{et al.}} \cite{r32} considered the time-varying nature of the channel gain and proposed a sum ergodic capacity maximization problem while guaranteeing the latency violation probability for vehicle-to-vehicle (V2V) links. They derived the latency violation probability for V2V links. Nayak {\it{et al.}} \cite{r33} studied a link adaptation model based on Markov chain to predict three performance indicators: end-to-end link latency, throughout, and block error rate (BLER), which were evaluated by Monte Carlo simulations in various mobility scenarios. Kurma {\it{et al.}} \cite{r34} investigated  URLLC in full-duplex communication with an adaptive amplify-and-forward/decode-and-forward relaying protocol, and derived the expression of outage probability and BLER. Fang {\it{et al.}} \cite{r35} considered a vehicular-to-infrastructure network with the in-band full-duplex backhauling scheme, and formulated a resource allocation problem by jointly vehicular user equipments association optimization, resource block assignment and power allocation. The application of sparse vector code (SVC) to  support URLLC in high mobility scenarios was studied in \cite{r36new}. The authors in \cite{r36} proposed a sparse superimposed vector transmission scheme, which is different from SVC, and developed an iterative interference cancellation algorithm for channel estimation and an iterative data-aided algorithm for date decoding. Zhang {\it{et al.}} \cite{r37} investigated the coexistence of URLLC and eMBB in HST communications. A total power minimization problem was proposed and solved with a greedy algorithm. However, the aforementioned researches did not consider mmWave technologies in high mobility scenarios. Few studies have focused on the applications of URLLC in HST communication systems. How to serve high mobility URLLC remains an open issue. This study investigated URLLC in mmWave and high mobility scenarios, specially the HST scenario, and aim to improve the reliability of HST communication systems.

\subsection{RIS-assisted URLLC}
The emerging RIS technology is considered to be an effective solution to enhance the throughput and reliability of wireless communications \cite{r38}. There have been many studies on RIS-aided URLLC in recent years. Hashemi {\it{et al.}} \cite{r39} analyzed the average PEP and average data rate in RIS-assisted URLLC systems with instantaneous channel state information (CSI). Additionally, a RIS-assisted URLLC system featuring non-linear energy harvesting with high reliability and low delay constraints was investigated in \cite{r40}, and the expressions for the BLER was derived. Almekhlafi {\it{et al.}} \cite{r41} studied the application of RIS in enhancing both eMBB and URLLC services, and formulated two optimization problems, that is, maximizing the eMBB sum rate by joint the RIS passive beamforming design and a multi-objective resource allocation problem aimed at maximizing URLLC packet allowance while minimizing eMBB loss rate. Futher, Hashemi {\it{et al.}} \cite{r42} studied a RIS-aided short packet communication system, and a multi-objective optimization problem was formulated for maximizing the FBL rate while minimizing the utilized channel blocklengths. Chandra {\it{et al.}} \cite{r43} investigated a RIS-assisted MU URLLC system in the presence of electromagnetic interference, and derived expressions for the capacity, system throughput, outage probability, and BLER. Moreover, Zhang {\it{et al.}} \cite{r44} investigated multiple RIS-assisted URLLC systems, and proposed a sum rate maximization problem. Abughalwa {\it{et al.}} \cite{r45} considered a downlink RIS-assisted URLLC system in the FBL regime, and proposed a geometric mean maximization by joint design of the transmission beamforming and RIS phase shifts. However, these studies mainly focus on reflective RIS and lack research on refracting RIS. This study aims to fill this gap by exploring the application of refracting RIS-assisted URLLC in high-mobility scenarios, especially in HST communication systems, to improve reliability and performance.

In summary, there are currently no works on URLLC for refracting RIS-assisted mmWave HST communications. This study addresses this gap by proposing a sum rate problem for HST communication systems assisted by a refracting RIS. To tackle this optimization problem, this study designs a joint optimization algorithm based on alternating optimization method. This involves decoupling the original problem into two subproblems: active beamforming design and PEP optimization subproblem, and discrete phase shift design subproblem. These subproblems are then addressed using general convex approximations and the local search method, respectively.

\section{System Model}
\subsection{Deployment Scenario}
This study considers a MU-MISO downlink URLLC system between BS and $M$ user on board in the mmWave HST scenario with refracting RIS-assisted in F{}ig.~\ref{fig:1}, where BS is equipped with a uniform planar array (UPA) of $ N_B =  N_{x_{\mathrm{B}}} \times  N_{y_{\mathrm{B}}}$ antennas and single antenna users indexed by $ m\in \mathcal{M}\triangleq  \left\{1,\cdots,M\right\}$. A RIS with a uniform planar array (UPA) of $F =  F_{x_{\mathrm{R}}} \times  F_{y_{\mathrm{R}}}$ refracting elements is mounted on the window. Since there has been studied proposed various effective methods for CSI acquisition of RIS-assisted communication systems and high mobility scenarios \cite{t22,t23,t24,t25}, it is reasonable to suppose that the CSI of all channels is perfectly known at both the BS and RIS. For ease of exposition, this study supposes that HST moves toward the BS from the left of BS with speed $v$, and then gradually moves away from the BS. This study assumes that each time slot length is smaller than the coherence time, the channel can be considered constant during each transmission time slot, thus the CSI stays consistent during the transmission time slot.
\begin{figure}[!t]
  \centering
  {\includegraphics[scale=0.4]{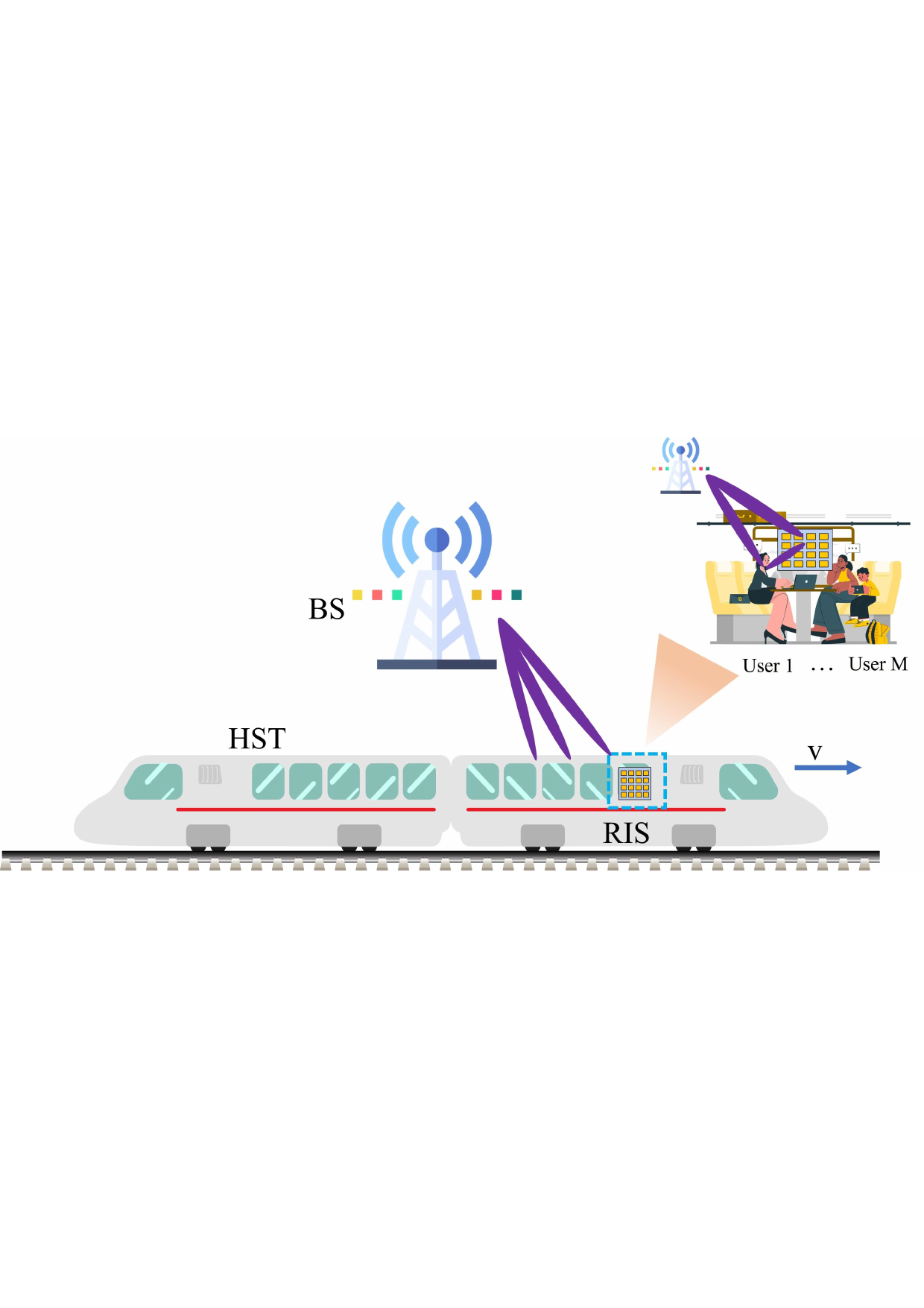}\label{fig:1}}
  \caption{ \label{fig:1} Refracting RIS-assisted URLLC system in mmWave  HST communications.}
\end{figure}

\subsection{Channel Model}
In high-speed mobility environments, a strong line-of-sight (LoS) path and a small amount of reflected and scattered multipath components are usually present. Consequently, the Rician K-factor can be used for modeling these channels, and the Rician fading channel has also been widely adopted in HST communications as in \cite{c5,c6}. Thus, this paper adopts the Rician fading channel. In the proposed system model, the direct link BS-User and the reflection link BS-RIS-User exists at the same time, where the reflection link BS-RIS-User can be cascaded with the BS-RIS and RIS-User links. The detailed descriptions of BS-User, BS-RIS, and RIS-User channels are given below.

\subsubsection{BS-User Channel}
The channel between BS and user $m$ is denoted as $\mathbf{h}_{\mathrm{d},m} \in \mathbb{C} ^{N_B \times 1}$, which can be given as
\begin{equation} \label{eq:hbm}
  \mathbf{h}_{\mathrm{d},m} =\sqrt{PL_{\mathrm{d},m}}\left( \sqrt{\frac{\kappa _f}{\kappa _f+1}}\overline{\mathbf{h}}_{\mathrm{d},m} +\sqrt{\frac{1}{\kappa _f+1}}\widetilde{\mathbf{h}}_{\mathrm{d},m} \right),
\end{equation}
where $\kappa _{f} \geq 0 $ is the Rician K-factor, $PL_{\mathrm{d},m}$ denotes the distance-dependent path loss, which can be expressed as
\begin{equation}
PL_{\mathrm{d},m} =\left( \frac{\lambda}{4\pi D_{\mathrm{d},m}} \right) ^{-\mathrm{\alpha}_{\mathrm{d},m}},
\end{equation}
where  $D_{\mathrm{d},m}$ denotes the distance between BS and user $m$, $\lambda = \frac{c}{f_c}$ denotes the wavelength, $f_c$ is the carrier frequency, and $c$ represents the light speed, ${\alpha}_{\mathrm{d},m}$ denotes the exponent of path loss, $\overline{\mathbf{h}}_{\mathrm{d},m}\in \mathbb{C} ^{N_B \times 1} $ is the LoS component, given by
\begin{equation}
\overline{\mathbf{h}}_{\mathrm{d},m} =e^{j2\pi f_{d1}\tau}\mathbf{a}_{\mathrm{B}}\left( \phi _{{n,m}}^{\mathrm{BU}} ,\delta _{n,m}^{\mathrm{BU}} \right),
\end{equation}
where $f_{d1}=\upsilon \cos \phi _{{n,m}}^{\mathrm{BU}} \cos \delta _{n,m}^{\mathrm{BU}} /\lambda$ and $\tau$  represent the Doppler frequency shift and the duration of a time slot, respectively, $\phi _{{n,m}}^{\mathrm{BU}}$ and $\delta _{n,m}^{\mathrm{BU}}$ are the azimuth and elevation arrival angles, respectively, $\mathbf{a}_{\mathrm{B}}\left( \cdot ,\cdot  \right) \in \mathbb{C} ^{N_B \times 1}$ denotes the antenna array response vectors associated with the BS. The resultant phase of any array element $\mathbf{a}_{\mathrm{B}}$ is equal to the superposition of the x and y phase difference components, which can be expressed as follows \cite{c8}:
\begin{equation} \label{eq:BS}
\mathbf{a}_{\mathrm{B}}\left( \phi _{n,m}^{\mathrm{BU}},\delta _{n,m}^{\mathrm{BU}} \right) =\mathbf{a}_{\mathrm{B}}^{x}\left( \phi _{n,m}^{\mathrm{BU}},\delta _{n,m}^{\mathrm{BU}} \right) \otimes \mathbf{a}_{\mathrm{B}}^{y}\left( \phi _{n,m}^{\mathrm{BU}},\delta _{n,m}^{\mathrm{BU}} \right).
\end{equation}

Consequently, where $\mathbf{a}_{\mathrm{B}}\left( \phi _{{n,m}}^{\mathrm{BU}},\delta _{n,m}^{\mathrm{BU}}\right)$ is the antenna array response vectors associated with the BS, it can be  written as
\begin{equation}
\begin{array}{l}
  \mathbf{a}_{\mathrm{B}}^{x}\left( \phi _{n,m}^{\mathrm{BU}},\delta _{n,m}^{\mathrm{BU}} \right)  \\
 =\frac{1}{\sqrt{N_{x_{\mathrm{B}}}}}\left[ 1,\cdots , e^{\frac{j2\pi d_{x_{\mathrm{B}}}n_{x_{\mathrm{B}}}}{\lambda}\sin \phi _{n,m}^{\mathrm{BU}}\cos \delta _{n,m}^{\mathrm{BU}}},  \right.   \\   \left.
\cdots ,  e^{\frac{j2\pi d_{x_{\mathrm{B}}}\left( N_{x_B}-1 \right)}{\lambda}\sin \phi _{n,m}^{\mathrm{BU}}\cos \delta _{n,m}^{\mathrm{BU}}} \right] ^T ,
\end{array}
\end{equation}

\begin{equation}
  \begin{array}{l}
    \mathbf{a}_{\mathrm{B}}^{y}\left( \phi _{n,m}^{\mathrm{BU}},\delta _{n,m}^{\mathrm{BU}} \right) \\
    =\frac{1}{\sqrt{N_{y_{\mathrm{B}}}}}\left[ 1,\cdots ,e^{\frac{j2\pi d_{y_{\mathrm{B}}}n_{y_{\mathrm{B}}}}{\lambda}\sin \phi _{n,m}^{\mathrm{BU}}\sin \delta _{n,m}^{\mathrm{BU}}}, \right.   \\   \left.
      \cdots ,e^{\frac{j2\pi d_{y_{\mathrm{B}}}\left( N_{y_{\mathrm{B}}}-1 \right)}{\lambda}\sin \phi _{n,m}^{\mathrm{BU}}\sin \delta _{n,m}^{\mathrm{BU}}} \right] ^T,
  \end{array}
\end{equation}
where $d_{x_{\mathrm{B}}}$ and $d_{y_{\mathrm{B}}}$ are the distance between the uniform antenna array elements along the X-axis and Y-axis respectively, $ N_{x_{\mathrm{B}}}$ and $ N_{y_{\mathrm{B}}}$ represent the number of rows and columns of the UPA in the 2D plane, respectively.

Moreover, $\widetilde{\mathbf{h}}_{\mathrm{d},m}\in \mathbb{C} ^{N_B \times 1}$ represents the multipath components, which is modeled as an independent circularly symmetric complex Gaussian (CSCG) random vector with $\mathcal{C} \mathcal{N} (0,\mathbf{I}_{N_B})$  to characterize small-scale fading.

\subsubsection{Refracting RIS-Assisted Channel}
The channel between the BS and the refracting RIS, and between the refracting RIS and the user $m$ are denoted by $\mathbf{h}_{\mathrm{BR}}\in \mathbb{C} ^{F\times N_B}$ and $\mathbf{h}_{\mathrm{R},m}\in \mathbb{C} ^{F\times 1}$, respectively, given by
\begin{equation}
\mathbf{h}_{_{\mathrm{BR}}}=\sqrt{PL_{\mathrm{BR}}}\left( \sqrt{\frac{\kappa _f}{\kappa _f+1}}\overline{\mathbf{h}}_{\mathrm{BR}}+\sqrt{\frac{1}{\kappa _f+1}}\widetilde{\mathbf{h}}_{\mathrm{BR}} \right),
\end{equation}
\begin{equation}
\mathbf{h}_{\mathrm{R},m}=\sqrt{PL_{\mathrm{R},m}}\left( \sqrt{\frac{\kappa _f}{\kappa _f+1}}\overline{\mathbf{h}}_{\mathrm{R},m}+\sqrt{\frac{1}{\kappa _f+1}}\widetilde{\mathbf{h}}_{\mathrm{R},m} \right),
\end{equation}
where $PL_{\mathrm{BR}}$ and $PL_{\mathrm{R},m}$denote the distance-dependent path losses,  given by
\begin{equation}
  PL_{\mathrm{BR}} =\left( \frac{\lambda}{4\pi D_{\mathrm{BR}}} \right) ^{-\alpha _\mathrm{BR}},
\end{equation}
\begin{equation}
  PL_{_{\mathrm{R},m}}=\,\,\left( \frac{\lambda}{4\pi D_{\mathrm{R},m}} \right) ^{-\alpha _{\mathrm{R},m}},
\end{equation}
respectively, where $\alpha _\mathrm{BR}$ and $\alpha _{\mathrm{R},m}$ are the exponent of path losses, $D_{\mathrm{BR}}$ and $D_{\mathrm{R},m}$ denote the distance between the BS and the refracting RIS, and between the refracting RIS and user $m$, respectively. $\overline{\mathbf{h}}_{\mathrm{BR}} \in \mathbb{C}^{F\times N_B}$ and $\overline{\mathbf{h}}_{\mathrm{R},m}\in \mathbb{C}^{F\times 1}$ denote the LoS components, and $\widetilde{\mathbf{h}}_{\mathrm{BR}}\in \mathbb{C}^{F\times N_B}$ and $\widetilde{\mathbf{h}}_{\mathrm{R},m}\in \mathbb{C}^{F\times 1}$ denote the multipath components. Similarly, the entries of $\widetilde{\mathbf{h}}_{\mathrm{BR}}$ and $\widetilde{\mathbf{h}}_{\mathrm{R},m}$ are independently generated from CSCG distribution  $\mathcal{C} \mathcal{N} (0,\mathbf{I}_{F\times N_B})$ and  $\mathcal{C} \mathcal{N} (0,\mathbf{I}_{F})$, respectively. $\overline{\mathbf{h}}_{\mathrm{BR}}$ is given by
\begin{equation}
  \overline{\mathbf{h}}_{\mathrm{BR}}=e^{j2\pi f_{d2}\tau}\mathbf{a}_{\mathrm{R}}\left( \phi _{f,n}^{\mathrm{R}},\delta _{f,n}^{\mathrm{R}} \right) \mathbf{a}_{\mathrm{B}}^{H}\left( \phi _{f,n}^{\mathrm{B}},\delta _{f,n}^{\mathrm{B}} \right),
\end{equation}
where $\mathbf{a}_{\mathrm{R}}\left( \cdot ,\cdot  \right) \in \mathbb{C}^{F\times 1}$ denotes the antenna array response vectors associated with the refracting RIS, which can be expressed in a similar form as \eqref{eq:BS},  $\phi _{f,n}^{\mathrm{R}} \left( \phi _{f,n}^{\mathrm{B}}\right)$ and $\delta _{f,n}^{\mathrm{R}} \left( \delta _{f,n}^{\mathrm{B}} \right)$ are the azimuth and elevation arrival angles,   respectively, and $f_{d2}=\upsilon \cos \phi _{f,n}^{\mathrm{R}}\cos \delta _{f,n}^{\mathrm{R}}/\lambda$ represents the Doppler frequency shift. Since the refracting RIS is relatively stationary with respect to the user, the influence of Doppler is not considered, $\overline{\mathbf{h}}_{\mathrm{R},m}$ can be given as
\begin{equation}
  \overline{\mathbf{h}}_{\mathrm{R},m}=\mathbf{a}_{\mathrm{R}}\left( \phi _{f,m}^{\mathrm{RU}},\delta _{f,m}^{\mathrm{RU}} \right),
\end{equation}
where $\phi _{f,m}^{\mathrm{RU}}$ and $\delta _{f,m}^{\mathrm{RU}}$ denote the azimuth and elevation arrival angles, respectively.

\subsection{Transmission Model}
According to the aforementioned models,  the equivalent channel between BS and user $m$ can be given as
\begin{equation} \label{eq:h}
  \mathbf{h}_m =\mathbf{h}_{\mathrm{d},m}^{H}+\mathbf{h}_{\mathrm{R},m}^{H}\mathbf{\Theta h}_{\mathrm{BR}},
\end{equation}
and the received signal at user $m$ can be expressed as
\begin{align}
  y_m=\mathbf{h}_m\mathbf{w}_ms_m+\sum_{j=1,j\ne m}^M{\mathbf{h}_m\mathbf{w}_js_j}+n_m,\forall m\in \mathcal{M},
\end{align}

where $\mathbf{w}_m\in \mathbb{C} ^{N_B \times1}$ denotes the transmit beamforming vectors for user $m$, and $s_m$ denotes the signal intended to user $m$, $\forall m\in \mathcal{M}$. Without loss of generality, this study assumes $\mathbb{E}\left\{s_m\right\}=0$ and $\mathbb{E}\left\{|s_m|^2\right\}=1$. $n_m\sim \mathcal{C} \mathcal{N} (0,\sigma ^2)$ is the additive white-Gaussian noise (AWGN). $\mathbf{\Theta }=\mathrm{diag}\left\{ e^{j\theta _1},\cdots ,e^{j\theta _F} \right\} $ denotes a diagonal matrix. $f\in \mathcal{F} \triangleq \left\{ 1,\cdots ,F \right\}$, $\theta _f$ represents the phase shift of the $f$-th refracting RIS element. For simplicity,  this study takes finite discrete values with equal quantization  between $\left[0, 2\pi \right)$. The set of phase shifts at each refracting RIS element is given by $\mathcal{S} =\left\{ 0, \Delta\theta ,\cdots ,\Delta\theta  \left( E-1 \right) \right\} $, where $ \Delta\theta  =\frac{2\pi}{E}$ and $E=2^b$, where $b$ is the number of refracting RIS quantization bits. 

For user $m$, the signal to interference plus noise ratio (SINR) can be represented as 
\begin{equation}
\gamma _m =\frac{\left| \left( \mathbf{h}_{_{\mathrm{d},m}}^{H}+\mathbf{h}_{\mathrm{R},m}^{H}\mathbf{\Theta h}_{\mathrm{BR}} \right) \mathbf{w}_m \right|^2}{\sum_{j\ne m}{\left| \left( \mathbf{h}_{_{\mathrm{d},j}}^{H}+\mathbf{h}_{\mathrm{R},j}^{H}\mathbf{\Theta }\mathbf{h}_{\mathrm{BR}} \right) \mathbf{w}_j \right|^2}+\sigma^{2}},\forall m\in \mathcal{M},
\end{equation}
where $\left| \cdot \right|$ denotes the modulus operation.

The Shannon capacity theorem for the maximum achievable rate is  invalid in FBL transmissions. The maximum achievable rate for FBL codes under quasi-static AWGN channel conditions is given by \cite{c9}
\begin{equation} \label{eq:rr}
R_m=\log _2\left( 1+\gamma _m \right) -\left( \frac{V_m}{L} \right) ^{1/2}Q^{-1}\left( \varepsilon _m \right),
\end{equation}
where $\varepsilon _m$ and $L$ are the packet error probability (PEP) and blocklength of user $m$, respectively, $V_m\triangleq \left( \log _2e \right) ^2\left( 1-\left( 1+\gamma _m \right) ^{-2} \right)$ is the channel dispersion, and $Q^{-1}\left( \cdot\right)$ is the inverse of Q-function $Q\left( x \right) =\frac{1}{\sqrt{2\pi}}\int_x^{\infty}{e^{-\frac{t^2}{2}}dt}$.

\subsection{Problem Formulation}
Denote by $P_{\max}$ the maximum transmit power. Let $\boldsymbol{\theta}=[\theta_1,\cdots,\theta_F]$ and $\mathbf{W}=[\mathbf{w}_1,\cdots,\mathbf{w}_M] \in \mathbb{C}^{N_B\times M}$. This study aims to maximize the system sum rate at users by jointly optimizing the active beamforming matrix $\mathbf{W}$ at the BS, PEP for each user, and discrete phase shifts $\boldsymbol{\theta}$ at the refracting RIS in this paper. The optimization problem cam be formulated as
\begin{align}  \label{eq:P11}
  \mathcal{P}_1: \underset{\mathbf{W},\boldsymbol{\theta },\varepsilon _m}{\max }&\sum_{m=1}^M{R_m}  \\
  \mathrm{s}.\mathrm{t}. \quad & \mathrm{Tr}\left( \mathbf{W}^H\mathbf{W} \right) \le P_{\max} \tag{\ref{eq:P11}a},   \label{yyaa1} \\
  &0\leqslant \varepsilon _m\left( t \right) \leqslant \varepsilon _{m,\max}, \forall m\in \mathcal{M},  \tag{\ref{eq:P11}b} \label{yyaa2} \\
  & \theta _f \in \mathcal{S} =\left\{ 0,\Delta \theta ,\cdots ,\Delta \theta \left( E-1 \right) \right\} ,\forall f\in \mathcal{F},   \tag{\ref{eq:P11}c} \label{yyaa3}
\end{align}
where constraint \eqref{yyaa1} guarantees that the total transmit power does not exceed the maximum transmit power $P_{\max}$ at BS, and constraint  \eqref{yyaa2} restricts the worst-case PEP at user $m$ to $\varepsilon _{m,\max}$, and \eqref{yyaa3} means that the phase shift of each refracting RIS element takes its value from a discrete set $\mathcal{S}$.

\section{Sum Rate Maximization}
The formulated sum rate maximization optimization problem in $\mathcal{P}_1$ is intractable in polynomial time due to the non-concave objective function and the non-convex discrete phase shift constraints for refracting RIS elements. Thus, in this section, this study proposes  a power allocation algorithm based on alternative optimization method. First, this study decouples the joint optimization problem $\mathcal{P}_1$ into two subproblems, i.e., active beamforming design and PEP optimization and discrete phase shift design. Then, this study employs employ Lagrangian dual method and the local search method to solve the sumproblems in an alternating manner.

\subsection{Active Beamforming Design and PEP Optimization Subproblem}
For any given phase shifts $\boldsymbol{\theta}$, the problem of joint design of active beamforming and PEP optimization can be rewritten as
\begin{align}  \label{eq:P222}
  \mathcal{P}_2: \underset{\mathbf{W},\varepsilon _m}{\max }&\sum_{m=1}^M{R_m}  \\
  \mathrm{s}.\mathrm{t}. \quad & \mathrm{Tr}\left( \mathbf{W}^H\mathbf{W} \right) \le P_{\max} \tag{\ref{eq:P222}a},   \\
  &0\leqslant \varepsilon _m\left( t \right) \leqslant \varepsilon _{m,\max}, \forall m\in \mathcal{M}.  \tag{\ref{eq:P222}b}
\end{align}

However, the optimization problem  $\mathcal{P}_2$  still has a non-concave objective function.
The maximum achievable rate in \eqref{eq:rr}  for user $m$ in high SINR regime can be approximately expressed as \cite{c10} 
\begin{equation} \label{eq:P19}
\widetilde{R}_m\approx \log _2\left( \gamma _m \right) -\left( \frac{1}{L} \right) ^{1/2}Q^{-1}\left( \varepsilon _m \right) \log _2e.
\end{equation}

For given $\mathbf{W}$, we introduce the following theorem to obtain optimal $\varepsilon _{m}$ \cite{r44}: 
\begin{theorem}
The optimal PEP $\varepsilon _{m}^*$ must meet $\varepsilon _{m}^* = \varepsilon _{m,\max}, \forall m \in m\in \mathcal{M}$.
\end{theorem}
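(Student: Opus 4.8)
The plan is to show that the objective $\sum_m R_m$ (or its high-SINR surrogate $\sum_m \widetilde{R}_m$) is strictly decreasing in each $\varepsilon_m$, so that pushing every $\varepsilon_m$ to its upper bound $\varepsilon_{m,\max}$ can only increase the sum rate, while all remaining constraints \eqref{eq:P222}a and the phase-shift constraint are untouched by $\varepsilon_m$. First I would fix $\mathbf{W}$ (and $\boldsymbol{\theta}$), so that each SINR $\gamma_m$ is a constant, and hence the channel dispersion $V_m = (\log_2 e)^2\bigl(1-(1+\gamma_m)^{-2}\bigr)$ is also a nonnegative constant independent of $\varepsilon_m$. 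Then $R_m$ in \eqref{eq:rr} depends on $\varepsilon_m$ only through the single term $-(V_m/L)^{1/2}Q^{-1}(\varepsilon_m)$, and the sum decouples across users, so it suffices to argue user by user.

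Next I would differentiate $R_m$ with respect to $\varepsilon_m$. Since $Q(\cdot)$ is strictly decreasing on $\mathbb{R}$, its inverse $Q^{-1}(\cdot)$ is strictly decreasing on $(0,1)$; concretely $\frac{d}{d\varepsilon_m}Q^{-1}(\varepsilon_m) = -\sqrt{2\pi}\,e^{(Q^{-1}(\varepsilon_m))^2/2} < 0$. Therefore
\begin{equation}
\frac{\partial R_m}{\partial \varepsilon_m} = -\left(\frac{V_m}{L}\right)^{1/2}\frac{d}{d\varepsilon_m}Q^{-1}(\varepsilon_m) = \left(\frac{V_m}{L}\right)^{1/2}\sqrt{2\pi}\,e^{(Q^{-1}(\varepsilon_m))^2/2} \geq 0,
\end{equation}
with strict inequality whenever $\gamma_m > 0$ (so that $V_m > 0$). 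Hence $R_m$ is nondecreasing in $\varepsilon_m$ over $(0,\varepsilon_{m,\max}]$, and strictly increasing in the non-degenerate case. The same computation applies verbatim to $\widetilde{R}_m$ in \eqref{eq:P19} with $V_m$ replaced by the constant $(\log_2 e)^2$.

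From monotonicity I would conclude by a simple exchange argument: given any feasible $(\mathbf{W},\boldsymbol{\theta},\{\varepsilon_m\})$, replacing each $\varepsilon_m$ by $\varepsilon_{m,\max}$ keeps feasibility (constraint \eqref{yyaa2} holds with equality, the others do not involve $\varepsilon_m$) and does not decrease the objective; therefore there is an optimal solution with $\varepsilon_m^* = \varepsilon_{m,\max}$ for all $m$, and in the non-degenerate case every optimal solution has this property. The only genuinely delicate point — the ``main obstacle'' — is the boundary behavior of $Q^{-1}$ at $\varepsilon_m \to 0^+$, where $Q^{-1}(\varepsilon_m)\to +\infty$ and the penalty term blows up; one must note that $R_m$ (or $\widetilde R_m$) then tends to $-\infty$, so the infimum over $\varepsilon_m$ is not attained at $0$ and the feasible set for $\varepsilon_m$ should be read as the half-open interval $(0,\varepsilon_{m,\max}]$, which is exactly what makes the maximizer sit at the upper endpoint. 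Everything else is a one-line derivative sign check, so I would keep the write-up short and just state the sign of $\partial R_m/\partial\varepsilon_m$.
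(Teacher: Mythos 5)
Your proof is correct and rests on the same key observation as the paper's: $R_m$ is increasing in $\varepsilon_m$ because $Q^{-1}(\cdot)$ is strictly decreasing, so the maximizer must sit at the upper endpoint $\varepsilon_{m,\max}$. The paper packages this as a one-line contradiction argument and merely asserts the monotonicity, whereas you supply the explicit derivative $\frac{d}{d\varepsilon_m}Q^{-1}(\varepsilon_m)=-\sqrt{2\pi}\,e^{(Q^{-1}(\varepsilon_m))^2/2}<0$ and the boundary remark at $\varepsilon_m\to 0^{+}$; these are worthwhile added details but not a different approach.
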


\begin{proof}
This theorem can be proved by contradiction as follows. The optimal PEP is denoted by $\varepsilon _{m}^*,\forall m$ and meets the strict inequalities, i.e., $\varepsilon _{m}^* < \varepsilon _{m,\max}$. When the constraints in optimization problem \eqref{eq:P222} are satisfied, increasing $\varepsilon _{m}^*$ value can increase the value of the rate. Therefore, this contradicts the optimality of $\varepsilon _{m}^*$.
\end{proof}

Thus, let $\varepsilon _{m}^* = \varepsilon _{m,\max}$ in \eqref{eq:P222}. However, the achievable rate expression in \eqref{eq:P222} is still non-concave. Eliminating MU interference term by using zero-forcing (ZF) beamforming enables further relaxation into a concave form.

With combined channel $\mathbf{h}_{m}$, $m \in \mathcal{M}$, the corresponding ZF constraints are given by $\mathbf{h}_m\mathbf{w}_j = \left( \mathbf{h}_{\mathrm{d},m}^{H}+\mathbf{h}_{\mathrm{R},m}^{H}\mathbf{\Theta} \mathbf{h}_{\mathrm{BR}} \right)\mathbf{w}_j = 0, \forall j \ne m, j \in \mathcal{M} $. Let $
\mathbf{H}=\mathbf{H}_{\mathrm{d}}^{H}+\mathbf{H}_{\mathrm{R}}^{H}\mathbf{\Theta }\mathbf{h}_{\mathrm{BR}}$, where $\mathbf{H}_{\mathrm{d}}^{H}=\left[ \mathbf{h}_{\mathrm{d},1},\dots \mathbf{h}_{\mathrm{d},M} \right] ^{^H}$ and $\mathbf{H}_{\mathrm{R}}^{H}=\left[ \mathbf{h}_{\mathrm{R},1},\dots \mathbf{h}_{\mathrm{R},M} \right] ^{^H}$. With those addition constraints, the ZF active beamformer can be given as
\begin{equation} \label{eq:ww}
\mathbf{W}=\mathbf{H}^H\left( \mathbf{HH}^H \right) ^{-1}\sqrt{\boldsymbol{P}},
\end{equation}
where $\boldsymbol{P}=\mathrm{diag}\left\{ p_1,\dots p_M \right\} $ is the power allocation matrix and $p_m$ denotes the power allocation of user $m$. Especially, the ZF beamforming fills $\left| \mathbf{h}_m\mathbf{w}_m \right|=\sqrt{p_m}$ and $\sum_{m\prime\ne m}{\left| \mathbf{h}_m\left( t \right) \mathbf{w}_{m\prime}\left( t \right) \right|^2}=0$, $\forall m \in\mathcal{M}$. Therefore, the optimized $R_m$ of user $m$ can be expressed as
\begin{equation} \label{eq:R3}
  \overline{R}_m =\log _2\left( p_m/\sigma ^2 \right) -\left( \frac{V_m\left( \gamma _m \right)}{L} \right) ^{1/2}Q^{-1}\left( \varepsilon _m \right).
\end{equation}

The active beamforming design subproblem can be converted into a power allocation subproblem by using \eqref{eq:P19} and \eqref{eq:R3}, which can be given as
\begin{align} \label{eq:P33}
  \mathcal{P}_3: \underset{p_{m\ge 0}}{\max }&\sum_{m=1}^M{\widetilde{\bar{R}}_m}\triangleq \sum_{m=1}^M{\left\{ \log 2\left( \frac{p_m}{\sigma ^2} \right) -\frac{Q^{-1}\left( \varepsilon _m \right) \log _2e}{\sqrt{L}} \right\}} \\ \nonumber
&\mathrm{s}.\mathrm{t}. \quad \mathrm{Tr}\left( \boldsymbol{P}^{1/2}\widetilde{\mathbf{W}}^H\widetilde{\mathbf{W}}\boldsymbol{P}^{1/2} \right) \le P_{\max}, \tag{\ref{eq:P33}a}
\end{align}
where $\widetilde{\mathbf{W}}^H=\mathbf{H}^H\left( \mathbf{HH}^H \right) ^{-1}$. The optimization problem $\mathcal{P}_3$ is convex with respect to $p_m$ and can be addressed via Lagrangian dual method.

Next, this study introduces the Lagrangian dual method in detail to solve the power allocation problem. The Lagrangian function of \eqref{eq:P33} are given as
\begin{align} \label{eq:LL}
  L\left( p_m,\mu \right) &=\sum_{m=1}^M{\tilde{\bar{R}}_m}-\mu \left( \mathrm{Tr}\left( \boldsymbol{P}^{1/2}\widetilde{\mathbf{W}}^H\widetilde{\mathbf{W}}\boldsymbol{P}^{1/2} \right) -P_{\max} \right)  \\ \nonumber
  &=\sum_{m=1}^M{\tilde{\bar{R}}_m}-\mu \left( \sum_{m=1}^M{p_m}\widetilde{\mathbf{w}}_{m}^{H}\widetilde{\mathbf{w}}_m-P_{\max} \right),
\end{align}
where $\mu \geqslant 0$ is the Lagrangian multiplier.

By differentiating $L\left( p_m,\mu \right)$ in \eqref{eq:LL} with respect to $p_m$, this study can obtain the optimal transmit power as follows:
\begin{equation}
p_m^\ast=\left[ \frac{1}{\mu \left( \widetilde{\mathbf{w}}_{m}^{H}\widetilde{\mathbf{w}}_m \right) \ln 2} \right] ^+,
\end{equation}
where $\left[ x \right] ^+=\max \left\{ x,0 \right\}$.

Accordingly, the Lagrange dual function can be written as
\begin{equation}
\mathcal{J} \left( \mu \right) =\min_{p_m} L\left( p_m,\mu \right),
\end{equation}

The problem in \eqref{eq:P33} can then be converted into a Lagrange dual problem:
\begin{align} \label{eq:LLL}
\underset{\mu}{\max } \quad \mathcal{J} \left( \mu \right) \\ \nonumber
\mathrm{s}.\mathrm{t}. \quad \mu \ge 0 .
\end{align}

Since $\mathcal{J} \left( \mu \right)$ is the minimum of linear functions relative to $\mu$ according to \eqref{eq:P33} and \eqref{eq:LL}, we can prove that \eqref{eq:LLL} is invariably convex. Therefore, the solutions of \eqref{eq:LLL} can be derived by utilizing the subgradient projection technology \cite{c13}. Thus, the projected subgradient method for \eqref{eq:LL} is expressed as
\begin{align}
&\mu \left( t_1 +1 \right) \\  \nonumber
&=\left[ \mu \left( t_1 \right) -\omega \left( t_1 \right) \left( P_{\max} - \mathrm{Tr}\left( \boldsymbol{P}^{1/2}\widetilde{\mathbf{W}}^H\widetilde{\mathbf{W}}\boldsymbol{P}^{1/2} \right) \right) \right] ^+,
\end{align}
where $t_1$ denotes the iteration index. The parameters $\omega $ is small enough for non-negative step size. The proposed Lagrange dual method-based power allocation algorithm is represented in Algorithm~\ref{PA}.

\begin{algorithm}[!t]
  \caption{Lagrange Dual Method-based Power Allocation Algorithm}
  \label{PA}
  \begin{algorithmic}[1]
  \REQUIRE
  Set the maximum number of iterations $T_1$ and convergence threshold $\varepsilon _p$, and initialize Lagrange multipliers $\mu \left( t_1 \right)$ for $t_1=0$
  \ENSURE
    {$p _m^\ast,~\forall m \in \mathcal{M}$ }
  \REPEAT
  \STATE Compute power allocation strategy
  \\
  $p_m=\left[ \frac{1}{\mu \left( \widetilde{\mathbf{w}}_{m}^{H}\widetilde{\mathbf{w}}_m \right) \ln 2} \right] ^+$;
  \STATE Update the Lagrange multipliers:\\
  $ 
  \begin{aligned}
    &\mu \left( t_1 +1 \right) \\  \nonumber
    &=\left[ \mu \left( t_1 \right) -\omega \left( t_1 \right) \left( P_{\max} - \mathrm{Tr}\left( \boldsymbol{P}^{1/2}\widetilde{\mathbf{W}}^H\widetilde{\mathbf{W}}\boldsymbol{P}^{1/2} \right)   \right) \right] ^+,
    \end{aligned};
  $
  \IF{$\left| \mu \left( t_1+1 \right) -\mu \left( t_1 \right) \right| < \varepsilon _p$}
  \STATE  Convergence = \textbf{true};
  \RETURN $p_m^\ast=p_m$;
  \ELSE
  \STATE $t_1=t_1+1$;
  \ENDIF
  \UNTIL Convergence = \textbf{true} or $t_1=T_1$;
  \end{algorithmic}
\end{algorithm}

\subsection{Phase Shift Design Subproblem}
Under the given active beamforming $\mathbf{W}$ and PEP $\varepsilon_{m}$, the problem for phase shift design $\boldsymbol{\theta}$ at refracting RIS can be formulated as
\begin{align}  \label{eq:P333}
  \mathcal{P}_4: \underset{\boldsymbol{\theta}}{\max }&\sum_{m=1}^M{R_m}  \\ \nonumber
  \mathrm{s}.\mathrm{t}. \quad & \theta _f \in \mathcal{S} =\left\{ 0,\Delta \theta ,\cdots ,\Delta \theta \left( F-1 \right) \right\} ,\forall f\in \mathcal{F}
\end{align}

Due to the non-convexity of problem in \eqref{eq:P333}, it is challenging to solve by convex optimization. Considering the complexity of the problem, the simplest straightforward method to solve this problem is exploiting exhaustive search. However, it is obvious that the complexity of exhaustive algorithms increases exponentially as the set of practical solutions increases \cite{c14}. Considering the complexity and effectiveness, this study utilizes the local search method to solve the discrete phase shift optimization problem as shown in Algorithm~\ref{phase} \cite{r13}. Specifically, when optimizing the phase shift of refracting RIS element $f$, keeping the other $F-1$ phase shift values fixed. Then for the phase shift $\theta_f$ of each  refracting RIS element, we traverse all possible values in the discrete phase shift set and select the optimal value $\theta_f^\ast$ that satisfies the optimization objective, i.e., maximizing the sum rate. Then, we use this optimal solution $\theta_f^\ast$ to overwrite the original value of $\theta_f$ when optimizing another phase shift, until all phase shifts in the set $\mathcal{S}$ are fully optimized.
\begin{algorithm}[!t]
  \caption{Local Search-based Algorithm for Phase Shift Design}
  \label{phase}
  \begin{algorithmic}[1]
  \REQUIRE
 number of refracting RIS elements $F$ and quantization bits $b$
  \ENSURE
    {$\theta _f^\ast,~\forall f \in \mathcal{S}$ }
  \FOR {$f=1:F$}
  \FOR {$\theta _f\in \mathcal{S}$}
  \STATE Assign all possible values to $\theta _f$, and choose the value maximize the sum rate in problem \eqref{eq:P333}   denoted as $\theta _f^\ast$;
  \STATE $\theta _f=\theta _f^\ast$;
  \ENDFOR
  \ENDFOR
  \end{algorithmic}
\end{algorithm}

\begin{algorithm}[!t]
  \caption{Sum Rate Maximization}
  \label{sumrate}
  \begin{algorithmic}[1]
  \REQUIRE
  Set the maximum number of iterations $T_1$, and convergence threshold $\varepsilon _p$, $\xi$, initialize Lagrange multipliers $\mu \left( t_1 \right)$ for $t_1=0$, number of RIS elements $F$, number of quantization bits $b$, $\varepsilon _{m}= \varepsilon _{m,\max}$, the initial iteration index $i = 1$
  \ENSURE $R^{\ast}, \boldsymbol{\theta}^{\ast}, \mathbf{W}^{\ast}$
  \STATE For given $\boldsymbol{\theta}_i$, update $\mathbf{W}_{i+1}$ using \eqref{eq:ww} by Algorithm~\ref{PA};
  \STATE For given $\mathbf{W}_i$, update $\boldsymbol{\theta}_{i+1}$  by Algorithm~\ref{phase};
  \IF{$\left|R^{(i+1)}-R^{(i)}\right|<\xi $}
  \STATE $R^{\ast}=R^{(i+1)}$;
  \STATE $\boldsymbol{\theta}^{\ast}=\boldsymbol{\theta}_{i+1}$;
  \STATE $\mathbf{W}^{\ast}=\mathbf{W}_{i+1}$;
  \ELSE
  \STATE $i = i +1$, and return Step 1;
  \ENDIF
  \end{algorithmic}
\end{algorithm}

\subsection{Sum Rate Maximization}
According to the aforementioned, this study summarizes the power allocation subproblem and refracting RIS discrete phase shift optimization subproblem algorithms, and propose a solution that maximize the sum rate. As shown in Algorithm~\ref{sumrate}, this study sets the initial state of the algorithm randomly, and then alternately update the value of active beamforming matrix and refracting RIS discrete phase shift until $\left|R^{(i+1)}-R^{(i)}\right|<\xi $, the algorithm convergences.

Given that we are considering a HST communication scenario, which involves very rapid movement, the computing speed of the central processing unit (CPU) executing the optimization algorithm becomes a significant physical limitation. Utilizing a more advanced CPU will execute the algorithm faster, increasing the likelihood of achieving optimal performance.

\subsection{Complexity Analysis}
First, there is the active beamforming design and PEP optimization subproblem. According to Theorem 1, the optimal value of PEP $\varepsilon _m$ is the threshold $\varepsilon _{m,\max}$ of the constraint condition, so there is no need to optimize iteratively, which is equivalent to a fixed value. We only need to optimize the beamforming $\mathbf{W}$. After mathematical transformation, the optimization problem is converted into a power allocation subproblem. The frequency of Lagrange multiplier updates and the power allocation to all users in each iteration creates complexity. The number of uses is $M$, and the number of iterations to meet the convergence condition $\left| \mu \left( t_1+1 \right) -\mu \left( t_1 \right) \right| < \varepsilon _p$  is $T_{\mathrm{inner}}$. Because the Lagrange multiplier is updated by the subgradient method, if the precision of the subgradient method is $\boldsymbol{\varpi }$, then its complexity is $\mathcal{O} \left( 1/\varpi ^2 \right) $. Thereby, the complexity of Algorithm~\ref{PA} is $\mathcal{O} \left( T_{\mathrm{inner}}M/\varpi ^2 \right)$. For the RIS phase shift design subproblem, the local search algorithm changes the value of element $f$ while keeping the other phase shifts constant and choosing the best one out of $E = 2^b$  possible ones, and updates a value for $\theta_f$. Due to the number of RIS elements is $F$,  the complexity of Algorithm~\ref{phase} is  $\mathcal{O} \left( EF \right)$. The complexity of the method for maximizing the sum rate is not only determined by the number of iterations, but also in terms of the two subproblems. Let $I_{\mathrm{outer}}$ be the number of iterations that satisfy the convergence condition $\left|R^{(i+1)}-R^{(i)}\right|<\xi $, we obtain the complexity of sum rate maximization algorithm is  $\mathcal{O} \left( I_{\mathrm{outer}}\left( T_{\mathrm{inner}}M/\varpi ^2+EF \right) \right)$.

\section{Performance Elevation}
In this section, this study numerically evaluates the proposed refracting RIS-aided MU-MISO downlink URLLC system for mmWave HST. This study provides numerical simulation results to verify the superiority of the proposed refracting RIS-aided sum rate maximization approach for mmWave HST communication systems. Table~\ref{tab:1} summarizes the considered simulation parameters for the systems.

\subsection{Compared Schemes}
For comparison, this study considers the following benchmark schemes:

\begin{itemize}
  \item {\bf Ideal Phase Shift:} the RIS phase shift is aligned with the phase shift of the direct link channel and the cascaded channel, i.e., $\theta _f = \arg(\mathbf{h}_{_{\mathrm{d},m}}^{H})-\arg(\left[h_{\mathrm{R},m}^{H} \right]_f )-\arg(\left[\mathbf{h}_{\mathrm{BR}} \right]_f \mathbf{w}_m )$, where $\left[h_{\mathrm{R},m}^{H} \right]_f$ is the $f$th element of $\left[\mathbf{h}_{\mathrm{R},m}^{H} \right]_f$ and $\left[\mathbf{h}_{\mathrm{BR}} \right]_f$ is the $f$th row vector of  $\mathbf{h}_{\mathrm{BR}}$.
  \item {\bf Shannon Rate:} to obtain an upper bound on the sum rate, the Shannon rate formula is accepted in this paper, i.e., $V_m = 0$. The resulting optimization problem is addressed utilizing a modified version of the proposed scheme.
  \item {\bf Shannon Rate with Ideal Phase Shift:} to obtain an upper bound on the sum rate, the Shannon rate formula is accepted in this paper, i.e., $V_m = 0$. The power allocation problem is solved in the same way as in the proposed scheme, except that the RIS phase shift adopts an ideal phase shift method.
  \item {\bf Binary Search:} the binary search method is used to obtain the appropriate Lagrangian multiplier, and the other parts are the same as the proposed scheme.
  \item {\bf Random Phase Shift:} for each refracting RIS element phase shift $\theta _f$, it can be randomly assigns the value in the range $\left[0, 2\pi \right)$ and keeps it unchanged. We adopt optimized beamforming for the BS.
  \item {\bf Without RIS:} this scheme does not consider the deployment of refracting RIS for signal reflection.
\end{itemize}

\begin{table*}[!t]
  \caption{Simulation Parameters}
  \label{tab:1}
  \centering
  \begin{tabular}{|c|c|c|c|c|c|} \hline
    \textbf{Parameter}  & \textbf{Symbol}    & \textbf{Value} & \textbf{Parameter}  & \textbf{Symbol}    & \textbf{Value}     \\
    \hline
     { Convergence threshold}        & $\varepsilon _p, \xi$   & $10^{-5}$ & Carrier frequency   & $f_c$   & $28$ GHz  \\
    \hline
    \ {Speed of HST}       & $v$      &  $360$ km/h   & Transmit power & $P_{\max}$  &  $30$ dBm \\
    \hline
     {Number of antennas}       & $N_B$     &  $4$   & Noise power  & $\sigma^2$   & \makecell{ $-174~ {\rm{dBm/Hz}}$ \\ $+10\log _{10}B+10 ~\rm{dB}  $} \\
    \hline
     {Number of RIS elements}        & $F$      &  $36$  & {Maximum PEP} & $\varepsilon _{m,\max}$  & $10^{-4}$ \\
    \hline
     Blocklength  & $L$  & $100$  &  Rician $K$-factor  & $\kappa_f$ & $10$ dB \\
    \hline
    {Number of users} & $M$  &   $4$  & Number of phase quantization bits  &  $b$  & $2$  \\
    \hline
    {System bandwidth}    & $B$     & $200$ MHz & Path loss exponent  & $\alpha_{\left( {\rm{(d,m),(R,m),BR}}\right)}$ & $4, 3, 2$  \\
    \hline
  \end{tabular}
  
\end{table*}

\subsection{Convergence of Proposed Algorithm}
First, this study certifies the convergence behavior of the proposed algorithm for addressing the joint active beamforming and refracting RIS discrete phase shift design problem. As shown in Fi{}g.~{\ref{fig:2}}, the proposed algorithm is effective and illustrates the convergence rates of the six schemes, excluding the binary search. The performance of the proposed algorithm significantly surpasses both the random phase shift scheme and the scheme without RIS. The value of the equivalent channel is a superposition of complex numbers. The random phase shift method randomly assigns phase shifts to each element of the RIS, which may result in the reflection link at the RIS being exactly opposite to the direct link. This opposition weakens the gain of the combined channel, leading to poorer performance compared to the proposed algorithm. In the absence of RIS deployment, there is no superposition of RIS reflection links, only direct links, resulting in even worse performance than both the proposed algorithm and the random phase shift scheme. The curve considering the Shannon rate provides an upper bound of the sum rate.  However, the rate expression used in the proposed scheme yields a lower value compared to the Shannon rate due to packet decoding errors, resulting in lower performance than the Shannon rate scheme. Additionally, the performance of the ideal phase scheme is observed to be better than that of the Shannon rate scheme. This is because the value of the Shannon capacity component in the rate expression \eqref{eq:rr} under the ideal phase scheme, referred to as $\log _2\left( 1+\gamma _m \right)$, is greater than the rate value of the Shannon rate scheme under the proposed algorithm. The value of the subtracted part in \eqref{eq:rr} is relatively small, much smaller than the difference in $\log _2\left( 1+\gamma _m \right)$ between the two cases. Therefore, the overall performance of the ideal phase scheme surpasses that of the Shannon rate.  Furthermore, the ideal phase shift case outperforms the RIS discrete phase shift optimized case. When the Shannon rate adopts the ideal phase, the upper bound of the ideal phase scheme is achieved. It can also be observed that, except for the binary search, the total rates of the six schemes converge within approximately 21 iterations, with similar convergence speeds. This further verifies the effectiveness of the proposed joint active beamforming and discrete phase shift optimization scheme. The total rate of the binary search scheme converges within about 17 iterations, and the final converged value matches that of the proposed algorithm. However, its value does not exhibit a stable trend in the initial iterations; it first increases and then decreases before converging. In contrast, the trend of the proposed algorithm steadily.
\begin{figure}[!t]
  \centering
  {\includegraphics[scale=0.6]{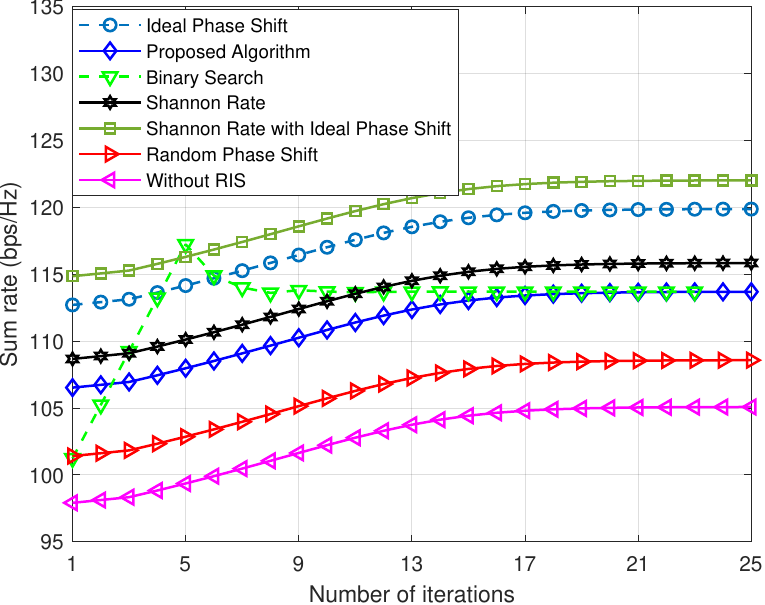}}
  \caption{ \label{fig:2}Sum rate versus number of iterations for different schemes.}
\end{figure}
\begin{figure}[!t]
  \centering
  {\includegraphics[scale=0.6]{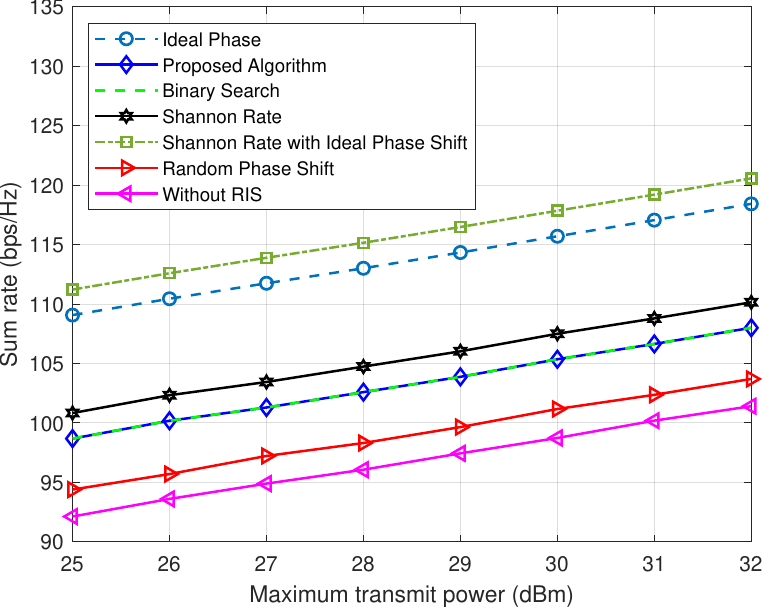}}
  \caption{ \label{fig:3}Sum rate versus transmit power $P_{\max}$ for different schemes.}
\end{figure}

\subsection{Impact of Transmit power}
Fi{}g.~{\ref{fig:3}} illustrates the sum rate against the transmit power $P_{\max}$. It is observed that the higher transmit power from the BS leads to better system performance across all seven cases. Notably, the proposed algorithm consistently outperforms both the random phase shift scheme and without RIS scheme, except when compared to the ideal phase shift, the Shannon rate with ideal phase shift and the Shannon rate scheme. Additionally, the performance of the binary search scheme similarly to the proposed algorithm. This similarity arises because binary search only differs in the method of obtaining the Lagrangian multiplier, yet the final convergence value remains the same as that of the proposed algorithm as shown in Fi{}g.~{\ref{fig:2}}, resulting in equivalent performance. In the current scenario, the Shannon rate curve provides an upper bound on the rate but cannot guarantee the high reliability and low latency demands of the system. This is because the Shannon rate does not consider the performance loss due to SPT in power allocation optimization, resulting in a power allocation policy that may exceed power limitations. The performance gap between the Shannon rate scheme and the proposed scheme remains nearly constant as transmit power increases. Similarly, the performance gaps between all other schemes remains nearly constant. For given $P_{\max} = 30$ dBm, the sum rate of the proposed algorithm is $6.7\%$ higher than that of the system without RIS deployment. Additionally, for a sum rate of approximately $100$ bps/Hz, the proposed algorithm requires $16.13\%$ less transmit power compared to the scheme without refracting RIS. This implies that refracting RIS deployment can not only improve the system coverage capacity but also reduce transceiver power consumption and manufacturing costs.

\begin{figure}[!t]
  \centering
  {\includegraphics[scale=0.6]{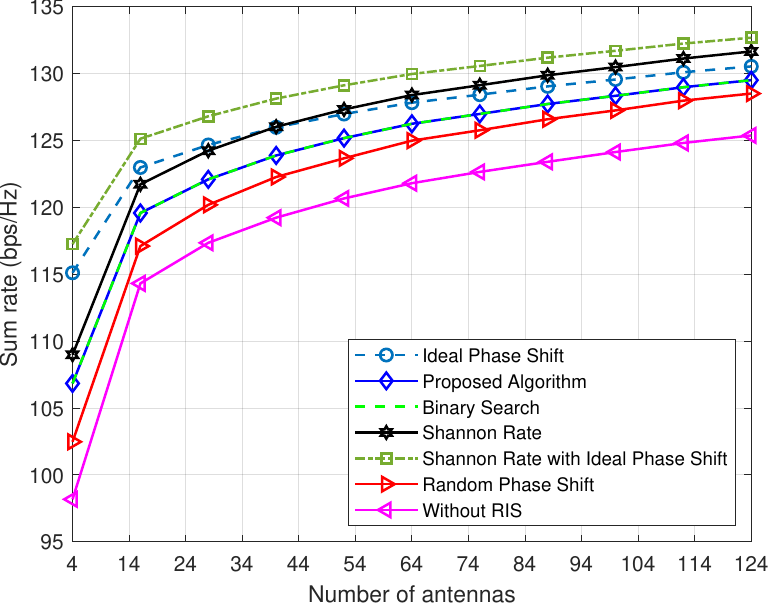}}
  \caption{ \label{fig:4}Sum rate versus number of antennas $N_B$ for different schemes.}
\end{figure}
\subsection{Impact of Number of Antennas}
Fi{}g.~{\ref{fig:4}} shows the sum rate versus the number of antennas $N_B$. As can be seen from Fi{}g.~{\ref{fig:4}}, the sum rate under each of the seven schemes increases with $N_B$, due to the higher diversity gain provided by more antennas. The proposed algorithm is superior to both the random phase shift scheme and without RIS scheme. Under the same conditions, when $N_B=4$, $N_B=34$, $N_B=64$ and $N=94$, the sum rate of the proposed algorithm is  $4.26\%$, $1.32\%$, $0.92\%$ and $0.78\%$ higher than the random phase shift, respectively. This demonstrates that the performance gap between the proposed algorithm and the random phase shift narrows as $N_B$ increases. This occurs because the received signal of the direct link becomes stronger and the impact of the refracting RIS discrete phase shift optimization becomes weaker with an increasing number of antennas. In other words, the performance of the random phase shift RIS can also be assured when the number of antennas is large enough, but this increases the maintenance cost of the antenna arrays. This also underscores the necessity of using RIS with optimized phase shifts. Additionally, the performance of the proposed algorithm approaches that of the ideal phase shift scheme as $N_B$ increases. Similarly, the performance of the Shannon rate scheme also approaches the Shannon rate scheme with ideal phase shift as $N_B$ increases. This indicates that the higher diversity gain brought by an increasing number of antennas can compensate for the performance difference between the discrete RIS phase shift and the ideal phase shift, with this difference diminishing as the number of antennas increases. Therefore, it can be inferred that as the number of antennas further increases, the performance of the proposed algorithm could approximate that of the ideal phase shift scheme. An interesting observation is that when the number of antennas is less than $44$, the performance of the ideal phase shift scheme is better than that of the Shannon rate scheme, though the performance gap gradually narrows. The reason for this trend is similar to the explanation provided in Fi{}g.~{\ref{fig:2}}. However, when the number of antennas exceeds 44, the Shannon rate scheme outperforms the ideal phase shift scheme, with the performance gap gradually widening as the number of antennas increases. This further demonstrates that the performance difference between the discrete RIS phase shift and the ideal phase shift decreases with an increasing number of antennas. It also suggests that when the number of antennas reaches a certain threshold, the performance of the Shannon rate scheme can match that of the ideal phase shift scheme. As the number of antennas continues to increase, the performance of the Shannon rate scheme surpasses that of the ideal phase shift scheme. This occurs because, with more antennas, the received signal from the direct link becomes stronger, thereby weakening the influence of the RIS reflection link.

\begin{figure}[!t]
  \centering
  {\includegraphics[scale=0.6]{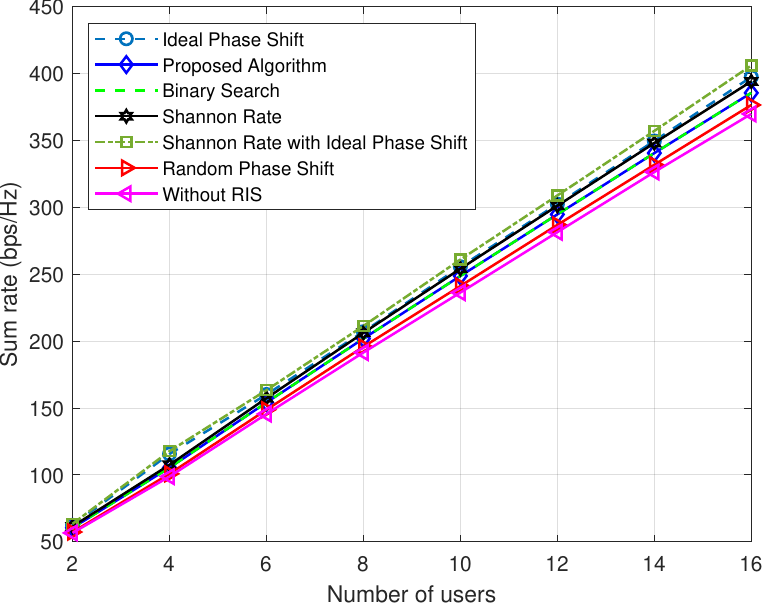}}
  \caption{ \label{fig:5}Sum rate versus number of users $M$ for different schemes.}
\end{figure}
\subsection{Impact of Number of Users}
Fi{}g.~{\ref{fig:5}} illustrates the sum rate versus the number of users $M$. It can be observed that the sum rates increase with $M$ across all seven schemes. The proposed scheme outperforms both the random phase shift and the scheme without RIS. Additionally, the performance gap between the proposed scheme and the Shannon rate increases with $M$. This phenomenon arises because the increasing number of users intensifies channel time dispersion, thereby hindering communication quality and slowing down performance improvements, resulting in a widening gap. A similar effect occurs between our proposed scheme and the random phase shift due to the random selection of phase shifts potentially degrading system performance. The growing gap between the proposed scheme and the scheme without RIS deployment is attributed to the increasing enhancement in channel gain through RIS, which accelerates system performance improvements with $M$, underscoring the significance of RIS deployment.

\begin{figure}[!t]
	\centering
	{\includegraphics[scale=0.6]{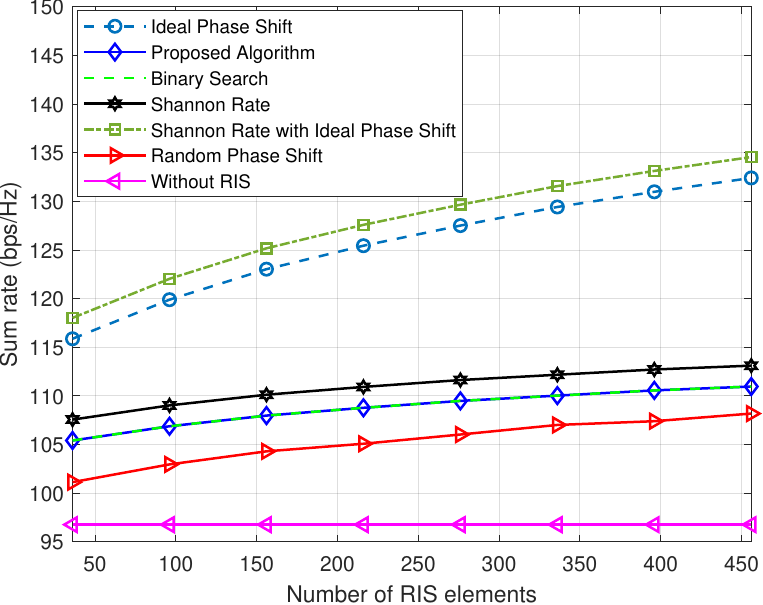}}
	\caption{ \label{fig:6}Sum rate versus number of RIS elements $F$ for different schemes.}
\end{figure}
\subsection{Impact of Number of RIS Elements}
Fi{}g.~{\ref{fig:6}} shows the effect of number of refracting RIS elements on the sum rate of the four schemes, by varying number of refracting RIS elements $F$ from $36$ to $456$. From the results, it can be observed that the three cases with RIS are superior to the case of without RIS deployment, with their sum rates increasing as the number of RIS element $F$ increases. This improvement is due to the fact that adding RIS elements produce sharper beams, which can more effectively refract the incident signals to the target user, enhance the received power at users, and reduce the interferences among users. When $F=216$, $F=276$, $F=336$, $F=396$, and $F=456$, the sum rate of the proposed scheme increases by around $0.64\%$, $0.5\%$, $0.47\%$ and $0.39\%$, respectively. This demonstrates that this upward trend gradually slows down as $F$ increases, indication that under the same conditions, the sum rate does not increase infinitely as the number of  RIS elements increases. The sum rate provided by the RIS reflection link is mainly limited by transmit power, system model, and physical environment. Additionally, we can also observe that the performance gap between the proposed algorithm and the ideal phase shift becomes larger as $F$  increases. This is because the performance gain brought by the ideal phase shift case becomes larger as $F$  increases.

\subsection{Impact of Number of RIS Quantization Bits}
Fi{}g.~{\ref{fig:7}} illustrates the influence of the number of quantization bits $b$ on the sum rates. From the results, we can see that the sum rates of the proposed algorithm, the binary search scheme, and the Shannon rate scheme increase with $b$. This is due to the fact that phase resolution increases as the number of quantization bits increases, which enhances the channel gain at users and results in an improvement in the sum rate. As expected, the performance of the ideal phase scheme, the Shannon rate scheme with ideal phase, and the scheme without RIS is not affected by $b$. a significant increase in the sum rate for the Shannon rate scheme, the binary search scheme, and the proposed algorithm as $b$ increases from $1$ to $2$. However, when $b$ exceeds $2$, the growth rate of the sum rate for the proposed algorithm, the binary search scheme, and the Shannon rate scheme gradually slows down as $b$ increases. Additionally, the performance gap between the proposed algorithm and the ideal phase shift scheme gradually decreases as $b$ increases. This implies that performance of the proposed algorithm can better approach the ideal phase shift case when $b \rightarrow \infty $.
\begin{figure}[!t]
  \centering
  {\includegraphics[scale=0.6]{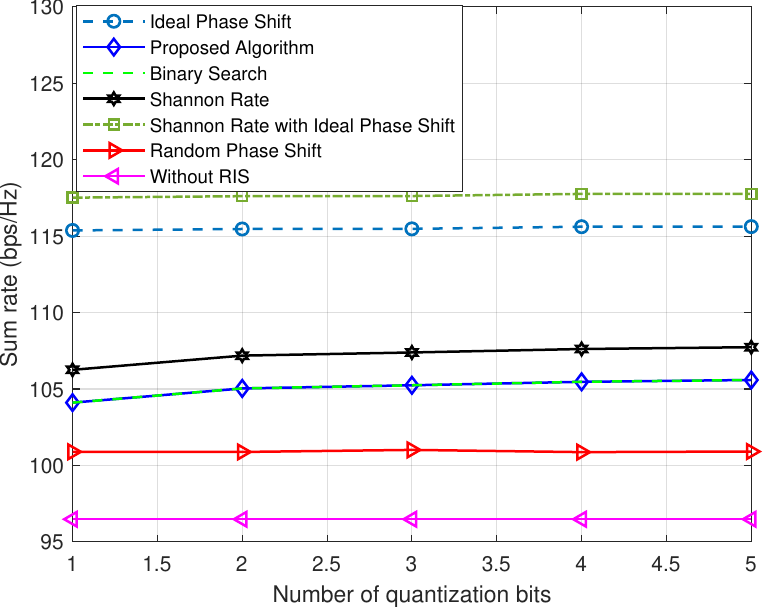}}
  \caption{ \label{fig:7}Sum rate versus number of RIS quantization bits $b$ for different schemes.}
\end{figure}

\subsection{Impact of Blocklength}
Fi{}g.~{\ref{fig:8}} shows the impact of  blocklength $L$ on the sum rate of the seven schemes, by varying blocklength $L$ from $100$ to $600$. As expected, the curve of the Shannon rate with ideal phase shift scheme and the Shannon rate scheme remains flat, and the sum rate performance of the other five cases increases with blocklength. This is because increasing the blocklength is equivalent to increasing the number of channel uses, thus more data packets can be transmitted per unit time, i.e., the transmission rate increases. From a mathematical perspective, the second term of \eqref{eq:rr} becomes smaller as $L$ increases, so the numerical result of the rate becomes larger, thereby increasing the system sum rate. It can also be observed that the sum rates of the five schemes, except for the Shannon rate with ideal phase shift scheme and the Shannon rate scheme, significantly improve when $L$ increases from $100$ to $200$. When $L$ is greater than $200$, the upward trend of the sum rate performance of the other five cases gradually slows down. This implies that as $L \rightarrow \infty $, the sum rates of the other four cases with RIS discrete phase shift will eventually converge to the Shannon rate scheme, and the sum rate of the ideal phase shift scheme will eventually converge to the Shannon rate with ideal phase shift scheme. Thus, the URLLC system's use of SPT can effectively reduce the latency of data transmission, but it sacrifices part of the rate performance. Additionally, if packet loss occurs, long packet transmission experience higher communication overhead compared to SPT, SPT generally outperforms long packet transmission. It can also be seen from the figure that the blocklength has a significant effect on system performance in short packet communication systems, and the communication system design based on SPT is essentially different from that based on high data rate long packet transmission.
\begin{figure}[!t]
  \centering
  {\includegraphics[scale=0.6]{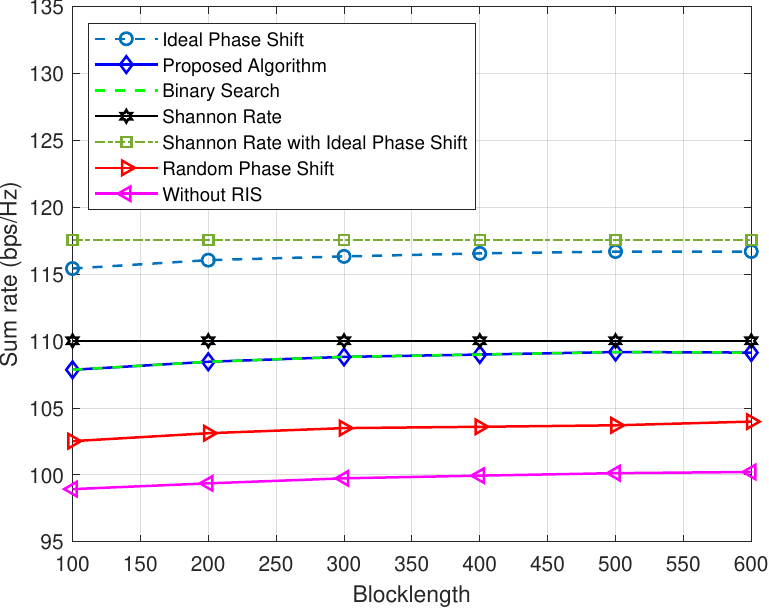}}
  \caption{ \label{fig:8}Sum rate versus blocklength $L$ for different schemes.}
\end{figure}

\begin{figure}[!t]
	\centering
	{\includegraphics[scale=0.6]{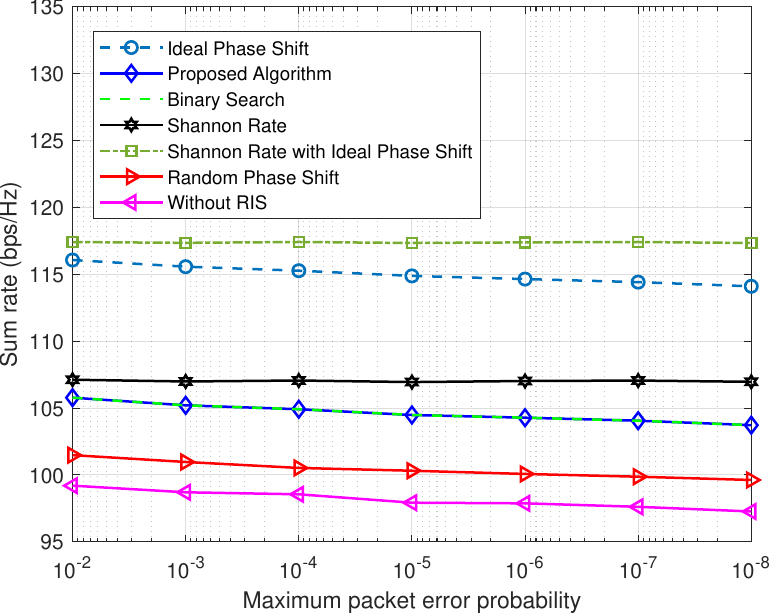}}
	\caption{ \label{fig:9}Sum rate versus maximum PEP $\varepsilon _{m,\max}$ for different schemes.}
\end{figure}
\subsection{Impact of Maximum PEP}
Fi{}g.~{\ref{fig:9}} illustrates the sum rate versus the maximum PEP $\varepsilon _{m,\max}$. As expected, the curve of the Shannon rate with ideal phase shift scheme and the Shannon rate scheme remains flat, while the sum rate of the other fives schemes decreases $\varepsilon _{m,\max}$ decreases from $10^2$ to $10^8$. This is because increasing $\varepsilon _{m,\max}$ means that the system's communication reliability  becomes lower, reducing the transmission requirements for data packets, and allowing more data packets to be successfully decoded, thereby increasing the transmission rate and improving the system rate. In other words, the higher reliability requirements of the communication system, the more pronounced the system performance degradation. Especially when $\varepsilon _{m,\max}$ is reduced from $10^{-2}$ to $10^{-8}$, the decrease in sum rate is more obvious. This indicates  that the improvement of system reliability will inevitably lead to the loss of part of the system performance. When $\varepsilon _{m,\max} = 10^{-2}$, the sum rate of the proposed algorithm is improved by about $4.2\%$ compared to the random phase shift and about $6.47\%$ compared to the case without RIS. Therefore, for a given $\varepsilon _{m,\max}$, the proposed algorithm has significant advantages in improving the sum rate. Moreover, under the same sum rate, the proposed algorithm has a lower PEP value, meaning that deploying refractive RIS can enhance the reliability of the system.

\subsection{Impact of HST Speed}
Fi{}g.~{\ref{fig:10}} illustrates the influence of HST speed $v$ on the sum rate of the seven schemes. From the results, we can see that the sum rates under the seven schemes increase with $v$. This is because HST can cover longer distance with a faster speed under the same time duration, bringing it closer to the BS, thereby increasing the channel gain, which enhances the received power at users and improves the rate. We can also see that system sum rate of the proposed scheme is superior to the scheme without RIS by $6.61\%$ at $v =100$ km/h and by $7.14\%$ at $v =1000$ km/h. Therefore, for a given $v$, the proposed algorithm can improve the sum rate. Furthermore, the proposed algorithm can achieve the same system performance at a lower speed. This demonstrates the necessity of refracting RIS deployment.
\begin{figure}[!t]
	\centering
	{\includegraphics[scale=0.6]{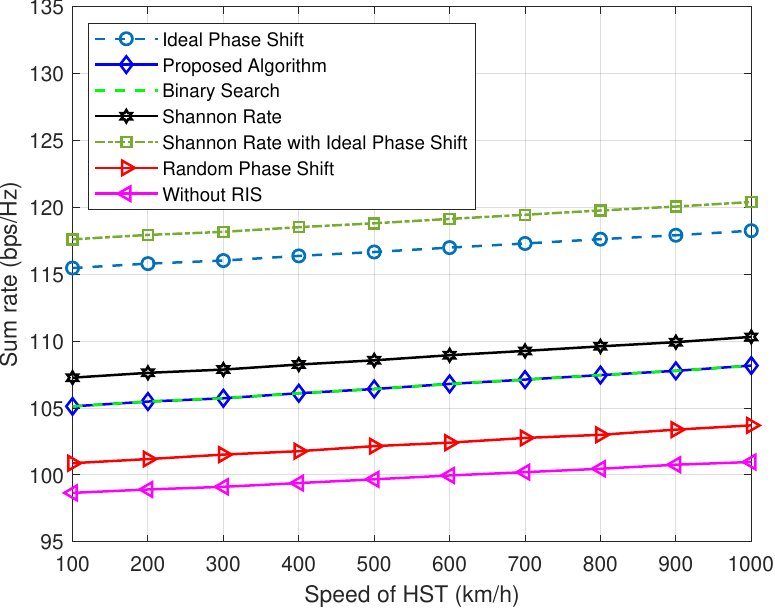}}
	\caption{ \label{fig:10}Sum rate versus HST speed $v$ for different schemes.}
\end{figure}
\begin{figure}[!t]
	\centering
	{\includegraphics[scale=0.6]{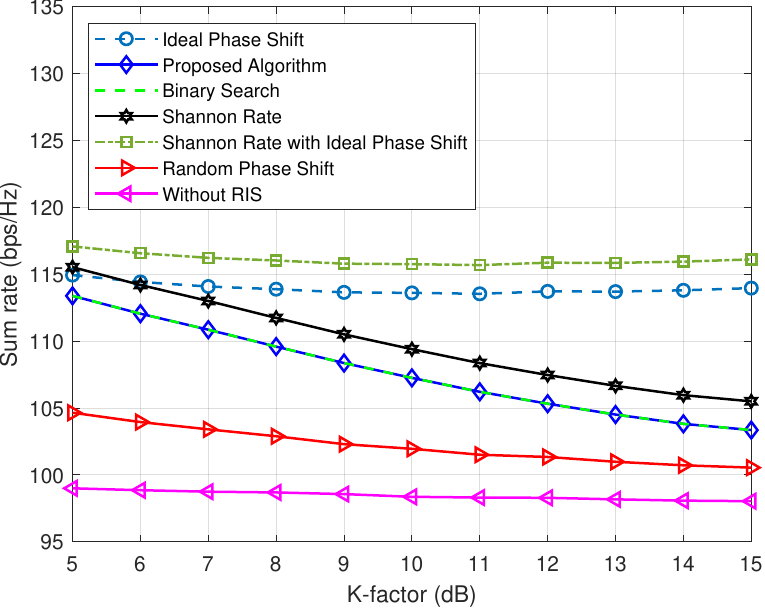}}
	\caption{ \label{fig:11}Sum rate versus K-factor $\kappa_f$  for different schemes.}
\end{figure}
\subsection{Impact of K-factor}
Fi{}g.~{\ref{fig:11}} illustrates the sum rate versus the Rician K-factors. As can be seen Fi{}g.~{\ref{fig:11}}, the sum rate under each of the seven schemes decreases with $\kappa_f$. It can also be observed that as $\kappa_f$ increases, the performance gap between the five schemes becomes narrower except for the ideal phase shift scheme and the Shannon rate scheme with ideal phase shift, where binary search exhibits the same performance as the proposed algorithm. This is because a larger $\kappa_f$ increases result in a stronger LoS component and better channel quality, while the reduction of adjustable refraction multipath limits the system capacity improvement brought by refracting RIS deployment. Additionally, it is evident that the performance gap between the ideal phase shift scheme and the proposed algorithm widens as the increase of $\kappa_f$. A similar trend is observed between the Shannon rate scheme with ideal phase shift and the Shannon rate scheme with discrete phase shift. This is because a larger $\kappa_f$ enhances the LoS component, while the impact of RIS gradually diminishes. However, the degree to which the RIS impact on performance is reduced in the ideal phase shift scheme and the Shannon rate scheme with ideal phase shift is significantly lower than in the proposed algorithm and the Shannon rate scheme, and this gap gradually increases as $\kappa_f$  continues to rise.

\section{Conclusions}
In this paper, we focused on a MU-MISO downlink URLLC in a refracting RIS-assisted HST communication system coverage enhancement. We proposed a sum rate maximization problem of joint active beamforming and RIS discrete phase shift optimization, while meeting the power, discrete phase shift, and reliability constraints. We designed a joint optimization algorithm based on the iterative optimization method. Furthermore, we investigated the influences of several crucial system parameters including transmit power, number of RIS elements, number of users, number of antennas, number of RIS quantization bits, blocklength, maximum PEP, speed of HST, and Rician K-factor on sum rate of HST communication systems. Simulation results show that deploying RIS can effectively improve system performance and verify the effectiveness of the proposed algorithm in terms of sum rate. In future work, we will consider more advanced and low-complexity methods for beamforming design and RIS phase shift optimization to increase the theoretical depth of this study \cite{r9new,r9new2}.


\begin{thebibliography}{00}
%%%%%%%%%%%%%%%%%%%%%%%引言论文%%%%%%%%%%%%%%%%%%%%%%%%%%%%%%%%%%%%%%%%%%%%%%%%%%%%%%%%%%%%%%%%%%%%
\bibitem{r1} B. Li et al., ``Toward reliable high-speed railway pantograph-catenary system state detection: multitask deep neural networks with runtime reliability monitoring," \emph{IEEE Transactions on Instrumentation and Measurement}, vol. 73, pp. 1--11, 2024, Art no. 3501511.

\bibitem{r1new} H. Zhao,  Q. Lyu,  Y. Liu,  J. Chen and  S. Zhang, ``Wireless channel measurements and modeling of LTE broadband system for high-speed railway scenarios," \emph{Chinese Journal of Electronics}, vol. 27, no.5, pp. 1092--1097, Sep. 2018.

\bibitem{r2new} H. Zhao, Y. Cao, L. Zhu and W. Xu, ``Integrated train ground radio communication system based TD-LTE," \emph{Chinese Journal of Electronics}, vol. 25, no.1, pp. 740--745, Jul. 2016.

\bibitem{r2} I. A. Tasiu, D. Cai, O. Bamisile and Z. Liu, ``Quasilinear fuzzy polytopic predictive control for the traction converters in the high-speed train," \emph{IEEE Transactions on Vehicular Technology}, vol. 73, no. 6, pp. 7880-17891, Jun. 2024.

\bibitem{r3} H. Song, X. Fang, and Y. Fang, ``Millimeter-wave network architectures for future high-speed railway communications: Challenges and solutions," \emph{IEEE Wireless Communications}, vol. 23, no. 6, pp. 114--1122, Dec. 2016.

\bibitem{r4new} R. He et al., ``Radio communication scenarios in 5G-railways," \emph{China Communications}, vol. 20, no. 9, pp. 235--246, Sep. 2023.

\bibitem{r5new} R. He et al., ``High-speed railway communications: From GSM-R to LTE-R," \emph{IEEE Vehicular Technology Magazine}, vol. 11, no. 3, pp. 49--58, Sep. 2016.

\bibitem{r4} B. Ai, A. F. Molisch, M. Rupp, and Z. Zhong, ``5G key technologies for smart railways,"  \emph{Proceedings of the IEEE}, vol. 108, no. 6, pp. 856--893, Jun. 2020.

\bibitem{r5} N. DOCOMO. (Jul. 2014). \emph{Docomo 5G White Paper-5G Radio Access: Requirements, Concept, and Technologies}. [Online]. Available: {\url{https://www.nttdocomo.co.jp/english/binary/pdf/corporate/technology/whitepaper_5g/DOCOMO_5G_White_Paper.pdf}}


\bibitem{r6} B. Ai et al., ``On indoor millimeter wave massive MIMO channels: Measurement and simulation," \emph{IEEE Journal on Selected Areas in Communications}, vol. 35, no. 7, pp. 1678--1690, Jul. 2017.

\bibitem{t1} M. Elkashlan, T. Q. Duong and H. -H. Chen, ``Millimeter-wave communications for 5G: fundamentals: Part I [Guest Editorial]," \emph{IEEE Communications Magazine}, vol. 52, no. 9, pp. 52--54, Sep. 2014.

\bibitem{t2} Y. Niu, Y. Li, D. Jin, L. Su, and A. V. Vasilakos. ``A survey of millimeter wave communications (mmWave) for 5G: Opportunities and challenges," \emph{Wireless Networks}, vol. 21, no. 8, pp. 2657--2676, Dec. 2015. 

\bibitem{t3} C. Chen et al., "Joint design of phase shift and transceiver beamforming for intelligent reflecting surface assisted millimeter-wave high-speed railway communications," \emph{IEEE Transactions on Vehicular Technology}, vol. 72, no. 5, pp. 6253--6267, May 2023.

\bibitem{t4} X. Zhang et al., ``QoS-aware user association and transmission scheduling for millimeter-wave train-ground communications," \emph{IEEE Transactions on Intelligent Transportation Systems}, vol. 24, no. 9, pp. 9532--9545, Sep. 2023.

\bibitem{3GPP} The Third Generation Partnership Project (3GPP), Standard TS 38.101-2, V18.3.0, Sep. 2023.

\bibitem{r7} R. He, et al., ``5G for railways: Next generation railway dedicated communications," \emph{IEEE Communications Magazine}, vol. 60, no. 12, pp. 130--136, Dec. 2022.

\bibitem{r8} Nokia Siemens Networks, ``2020: Beyond 4G: Radio evolution for the gigabit experience," Espoo, Finland, White Paper, 2011.

\bibitem{r9} C. Pan et al., ``Reconfigurable intelligent surfaces for 6G systems: Principles, applications, and research directions," \emph{IEEE Communications Magazine}, vol. 59, no. 6, pp. 14--20, Jun. 2021.

\bibitem{r9new} N. M. Tran, et al., ``Multifocus techniques for reconfigurable intelligent surface-aided wireless power transfer: Theory to experiment," \emph{IEEE Internet of Things Journal}, vol. 9, no. 18, pp. 17157--17171, Sep. 2022.

\bibitem{r9new2} S. Hu, Z. Wei, Y. Cai, C. Liu, D. W. K. Ng, and J. Yuan, ``Robust and secure sum-rate maximization for multiuser MISO downlink systems with self-sustainable IRS," \emph{IEEE Transactions on Communications}, vol. 69, no. 10, pp. 7032--7049, 2021.

\bibitem{r9new3} Q. Wu and R. Zhang, ``Beamforming optimization for wireless network aided by intelligent reflecting surface with discrete phase shifts," \emph{IEEE Transactions on Communications}, vol. 68, no. 3, pp. 1838--1851, Mar. 2020.

\bibitem{r10new} Ticao Zhang and Shiwen Mao, ``Energy-efficient federated learning with intelligent reflecting surface," \emph{IEEE Transactions on Green Communications and Networking}, vol. 6, no.2, pp. 845--858, Jun. 2022.

\bibitem{r11new} Ticao Zhang and Shiwen Mao, ``Joint beamforming design in reconfigurable intelligent surface-assisted rate splitting networks," \emph{IEEE Transactions on Wireless Communications}, vol. 23, no.1, pp. 263--275, Jan. 2024.

\bibitem{r10} R. Hashemi, S. Ali, N. H. Mahmood and M. Latva-aho, ``Average rate and error probability analysis in short packet communications over RIS-aided URLLC systems," \emph{IEEE Transactions on Vehicular Technology}, vol. 70, no. 10, pp. 10320--10334, Oct. 2021.

\bibitem{r11} Q. Wu and R. Zhang, ``Towards smart and reconfigurable environment: Intelligent reflecting surface aided wireless network," \emph{IEEE Communications Magazine}, vol. 58, no. 1, pp. 106--112, Jan. 2020.

\bibitem{r12} H. Ren, K. Wang et al., ``Intelligent reflecting surface-aided URLLC in a factory automation scenario,” \emph{IEEE Transactions on Communications}, vol. 70, no. 1, pp. 707--723, 2022.

\bibitem{r13} C. Liu et al., ``Reconfigurable intelligent surface assisted high-speed train communications: Coverage performance analysis and placement optimization," \emph{IEEE Transactions on Vehicular Technology}, no. 3, pp. 3750--3766, Mar. 2024.

\bibitem{r14} C. Liu et al., ``Coverage probability analysis of RIS-assisted high-speed train communications," in \emph{Proc. IEEE Wireless Communications and Networking Conference (WCNC)}, Glasgow, United Kingdom, 2023, pp. 1--6.

\bibitem{r15} M. Gao et al., ``IRS-assisted high-speed train communications: Performance analysis and optimal configuration," \emph{IEEE Internet of Things Journal}, vol. 10, no. 21, pp. 18980--18992, Nov. 2023.

\bibitem{r16} P. Li et al., ``RIS-assisted scheduling for high-speed railway secure communications," \emph{IEEE Transactions on Vehicular Technology}, vol. 72, no. 3, pp. 3488--3501, Mar. 2023.

\bibitem{r17} Y. Wang, G. Wang, R. He, B. Ai and C. Tellambura, ``Doppler shift and channel estimation for intelligent transparent surface assisted communication systems on high-speed railways,"  \emph{IEEE Transactions on Communications}, vol. 71, no. 7, pp. 4204--4215, Jul. 2023.

\bibitem{r18} K. Wang, C. -T. Lam and B. K. Ng, ``How to deploy RIS to minimize delay spread in HST communications: Railroad side, or train side?" in \emph{Proc. ICCT 2022 (Nanjing, China)}, Nov. 11--14, 2022, pp. 793--797

\bibitem{r19} Z. Chu, Z. Zhu, F. Zhou, M. Zhang, and N. Al-Dhahir, ``Intelligent reflecting surface assisted wireless powered sensor networks for internet of things,"  \emph{IEEE Transactions on Communications}, vol. 69, no. 7, pp. 4877--4889, Jul. 2021.

\bibitem{r20} Q. Wang, F. Zhou, R. Q. Hu, and Y. Qian, ``Energy efficient robust beamforming and cooperative jamming design for IRS-assisted MISO networks," \emph{IEEE Transactions on Wireless Communications}, vol. 20, no. 4, pp. 2592--2607, Apr. 2021.

\bibitem{r21} Z. Lin et al., ``Refracting RIS-aided hybrid satellite-terrestrial relay networks: Joint beamforming design and optimization,"  \emph{IEEE Transactions on Aerospace and Electronic Systems}, vol. 58, no. 4, pp. 3717--3724, Aug. 2022.

\bibitem{r22} Z. Huang, B. Zheng, and R. Zhang, ``Transforming fading channel from fast to slow: Intelligent refracting surface aided high-mobility communication," \emph{IEEE Transactions on Wireless Communications}, vol. 21, no. 7, pp. 4989--5003, Jul. 2022.

\bibitem{r23} J. Li et al., ``Throughput maximization for intelligent refracting surface assisted mmWave high-speed train communications," \emph{IEEE Internet of Things Journal}, vol. 11, no. 8, pp. 13299--13311, Apr. 2024.

\bibitem{r24} Y. Feng, F. Gao, X. Tao, S. Ma and H. V. Poor, ``Vision-aided ultra-reliable low-latency communications for smart factory," \emph{IEEE Transactions on Communications}, vol. 72, no. 6, pp. 3439--3453, Jun 2024.

\bibitem{r25} P. U. Adamu, M. López-Benítez and J. Zhang, ``Hybrid transmission scheme for improving link reliability in mmWave URLLC communications," \emph{IEEE Transactions on Wireless Communications}, vol. 22, no. 9, pp. 6329--6340, Sep. 2023.

\bibitem{r26} R. Liu, G. Yu, J. Yuan and G. Y. Li, ``Resource management for millimeter-wave ultra-reliable and low-latency communications,"  \emph{IEEE Transactions on Communications}, vol. 69, no. 2, pp. 1094--1108, Feb. 2021.

\bibitem{r27} R. Liu and G. Yu, ``User association for millimeter-wave ultra-reliable low-latency communications," \emph{IEEE Wireless Communications Letters}, vol. 10, no. 2, pp. 315--319, Feb. 2021.

\bibitem{r28} T. K. Vu, C. -F. Liu, M. Bennis, M. Debbah, M. Latva-aho and C. S. Hong, ``Ultra-reliable and low latency communication in mmWave-enabled massive MIMO networks," \emph{IEEE Communications Letters}, vol. 21, no. 9, pp. 2041--2044, Sep. 2017.

\bibitem{r29} E. Dias et al., ``Sliding window network coding enables next generation URLLC millimeter-wave networks," \emph{IEEE Networking Letters}, vol. 5, no. 3, pp. 159--163, Sep. 2023.

\bibitem{r30} M. S. Ibrahim, S. R. Khosravirad, J. Mazgula, H. Viswanathan and N. D. Sidiropoulos, ``Beam selection for ultra-reliable low-latency communication in industrial environment with beamforming repeaters," in \emph{Proc. IEEE International Conference on Communications Workshops (ICC Workshops)}, Montreal, QC, Canada, 2021, pp. 1--6.

\bibitem{r31} S. Samarakoon, M. Bennis, W. Saad and M. Debbah, ``Distributed federated learning for ultra-reliable low-latency vehicular communications," \emph{IEEE Transactions on Communications}, vol. 68, no. 2, pp. 1146--1159, Feb. 2020.

\bibitem{r32} C. Guo, L. Liang and G. Y. Li, ``Resource allocation for low-latency vehicular communications: An effective capacity perspective,"  \emph{IEEE Journal on Selected Areas in Communications}, vol. 37, no. 4, pp. 905--917, Apr. 2019.

\bibitem{r33} S. Nayak and S. Roy, ``Novel markov chain based URLLC link adaptation method for 5G vehicular networking," \emph{IEEE Transactions on Vehicular Technology}, vol. 70, no. 12, pp. 12302--12311, Dec. 2021.

\bibitem{r34} S. Kurma, P. K. Sharma, S. Dhok, K. Singh and C. -P. Li, ``Adaptive AF/DF two-way relaying in FD multiuser URLLC system with user mobility," \emph{IEEE Transactions on Wireless Communications}, vol. 21, no. 12, pp. 10224--10241, Dec. 2022.

\bibitem{r35} C. -H. Fang, K. -T. Feng and L. -L. Yang, ``Resource allocation for URLLC service in in-band full-duplex-based V2I networks," \emph{IEEE Transactions on Communications}, vol. 70, no. 5, pp. 3266--3281, May 2022.

\bibitem{r36new} H. Ji, S. Park, and B. Shim, ``Sparse vector coding for ultra reliable and low latency communications," \emph{IEEE Transactions on Wireless Communications}, vol. 17, no. 10, pp. 6693--6706, Oct. 2018.

\bibitem{r36} Y. Zhang et al., ``Sparse superimposed vector transmission for short-packet high-mobility communication," \emph{IEEE Wireless Communications Letters}, vol. 12, no. 11, pp. 1961--1965, Nov. 2023.

\bibitem{r37} J. Zhang and L. Li, ``A greedy strategy of multiplexing URLLC traffic within eMBB services for HSR," in \emph{Proc. International Conference on Computer and Communications (ICCC)}, Chengdu, China, 2021, pp. 1670--1674.

\bibitem{r38} Z. Zhang, L. Dai, X. Chen, C. Liu, F. Yang, R. Schober, and H. V. Poor, ``Active RIS vs. passive RIS: Which will prevail in 6G?" \emph{IEEE Transactions on Communice Communicationsations}, vol. 71, no. 3, pp. 1707--1725, Mar. 2023.

\bibitem{r39} R. Hashemi, S. Ali, N. H. Mahmood and M. Latva-aho, ``Average rate and error probability analysis in short packet communications over RIS-aided URLLC systems," \emph{IEEE Transactions on Vehicular Technology}, vol. 70, no. 10, pp. 10320--10334, Oct. 2021.

\bibitem{r40} S. Dhok, P. Raut, P. K. Sharma, K. Singh and C. -P. Li, "Non-linear energy harvesting in RIS-assisted URLLC networks for industry automation," \emph{IEEE Transactions on Communications}, vol. 69, no. 11, pp. 7761--7774, Nov. 2021.

\bibitem{r41} M. Almekhlafi, M. A. Arfaoui, M. Elhattab, C. Assi and A. Ghrayeb, ``Joint resource allocation and phase shift optimization for RIS-aided eMBB/URLLC traffic multiplexing," \emph{IEEE Transactions on Communications}, vol. 70, no. 2, pp. 1304--11319, Feb. 2022.

\bibitem{r42} R. Hashemi, S. Ali, N. H. Mahmood and M. Latva-Aho, ``Joint sum rate and blocklength optimization in RIS-aided short packet URLLC systems," \emph{IEEE Communications Letters}, vol. 26, no. 8, pp. 18381--1842, Aug. 2022.

\bibitem{r43} G. S. Chandra, R. K. Singh, S. Dhok, P. K. Sharma and P. Kumar, ``Downlink URLLC system over spatially correlated RIS channels and electromagnetic interference," \emph{IEEE Wireless Communications Letters}, vol. 11, no. 9, pp. 1950--1954, Sep. 2022.

\bibitem{r44} S. Pala, K. Singh, M. Katwe and C. -P. Li, ``Joint optimization of URLLC parameters and beamforming design for multi-RIS-aided MU-MISO URLLC system," \emph{IEEE Wireless Communications Letters}, vol. 12, no. 1, pp. 148--152, Jan. 2023.

\bibitem{r45} M. Abughalwa, H. D. Tuan, D. N. Nguyen, H. V. Poor and L. Hanzo, ``Finite-blocklength RIS-aided transmit beamforming," \emph{IEEE Transactions on Vehicular Technology}, vol. 71, no. 11, pp. 12374--12379, Nov. 2022.

\bibitem{t22} J. An, C. Xu, L. Wang, Y. Liu, L. Gan and L. Hanzo, ``Joint training of the superimposed direct and reflected links in reconfigurable intelligent surface assisted multiuser communications," \emph{IEEE Transactions on Green Communications and Networking}, vol. 6, no. 2, pp. 739--754, Jun. 2022.

\bibitem{t23} M. Wu, Z. Gao, Y. Huang, Z. Xiao, D. W. K. Ng and Z. Zhang, ``Deep learning-based rate-splitting multiple access for reconfigurable intelligent surface-aided Tera-Hertz massive MIMO," \emph{IEEE Journal on Selected Areas in Communications}, vol. 41, no. 5, pp. 1431--1451, May 2023.

\bibitem{t24} J. Qian, C. Masouros and A. Garcia-Rodriguez, ``Partial CSI acquisition for size-constrained massive MIMO systems with user mobility," \emph{IEEE Transactions on Vehicular Technology}, vol. 67, no. 9, pp. 9016--9020, Sep. 2018.

\bibitem{t25} Z. Qin, H. Yin, Y. Cao, W. Li and D. Gesbert, ``A partial reciprocity-based channel prediction framework for FDD massive MIMO with high mobility," \emph{IEEE Transactions on Wireless Communications}, vol. 21, no. 11, pp. 9638--9652, Nov. 2022.

\bibitem{c5} C. -X. Wang, A. Ghazal, B. Ai, Y. Liu and P. Fan, ``Channel measurements and models for high-speed train communication systems: A survey,"  \emph{IEEE Communications Surveys \& Tutorials}, vol. 18, no. 2, pp. 974--987, Secondquarter 2016.

\bibitem{c6} R. He, Z. Zhong, B. Ai, G. Wang, J. Ding and A. F. Molisch, ``Measurements and analysis of propagation channels in high-speed railway viaducts," \emph{IEEE Transactions on Wireless Communications}, vol. 12, no. 2, pp. 794--805, Feb. 2013.

\bibitem{c8} Y. Long, Z. Chen, J. Fang, and C. Tellambura, ``Data-driven-based analog beam selection for hybrid beamforming under mm-wave channels," \emph{IEEE Journal of Selected Topics in Signal Processing}, vol. 12, no. 2, pp. 340--352, May 2018.

\bibitem{c9} Y. Polyanskiy, H. V. Poor, and S. Verdu, ``Channel coding rate in the finite blocklength regime," \emph{IEEE Transactions on Information Theory}, vol. 56, no. 5, pp. 2307--2359, May 2010.

\bibitem{c10} K. Singh, M.-L. Ku, and M. F. Flanagan, ``Energy-efficient precoder design for downlink multi-user MISO networks with finite blocklength codes," \emph{IEEE Transactions on Green Communications and Networking}, vol. 5, no. 1, pp. 160--173, Mar. 2021.

\bibitem{c13} S. Boyd, L. Xiao, and A. Mutapcic, ``Subgradient methods," \emph{lecture notes of EE392o, Stanford University, Autumn Quarter} vol. 2004, pp. 2004--2005, 2003.

\bibitem{c14} E. Shtaiwi, H. Zhang, A. Abdelhadi and Z. Han, ``Sum-rate maximization coexistence System," in \emph{Proc. IEEE Global Communications Conference (GLOBECOM)}, Madrid, Spain, Feb. 2021, pp. 1--6.

\end{thebibliography}
\end{document}